\theoremstyle{plain}
\newtheorem{thm}{Theorem}[section]
\theoremstyle{definition}
\newtheorem{lemma}{Lemma}[section]
\newtheorem{example}{Example}[section]
\begin{document}

\begin{frontmatter}

\title{Self-dual cyclic codes over $M_2(\mathbb{Z}_4)$}

\author[addre]{Sanjit Bhowmick}
\ead{sanjitbhowmick392@gmail.com}
\address[addre]{
Department of Mathematics,
National Institute of Technology Durgapur,
Durgapur, INDIA}

\author[addre]{Satya Bagchi}
\ead{satya.bagchi@maths.nitdgp.ac.in}

\author[add]{Ramakrishna Bandi\corref{cor}}
\cortext[cor]{Corresponding author}
\ead{ramakrishna@iiitnr.edu.in}
\address[add]{Department of Mathematics,
Dr. SPM International Institute of Information Technology,
Naya Raipur, INDIA}

%%%%%%%%%%%%%%%%%%%%%%%%%%%%%%%%%%%%%%%%  abstract  %%%%%%%%%%%%%%%%%%%%%%%%%%%%%%%%%%%%%%%%%%%%%%%%%%%%%%%
\begin{abstract}
In this paper, we study the codes over the matrix ring over $\mathbb{Z}_4$, which is perhaps the first time the ring structure $M_2(\mathbb{Z}_4)$ is considered as a code alphabet. This ring is isomorphic to $\mathbb{Z}_4[w]+U\mathbb{Z}_4[w]$, where $w$ is a root of the irreducible polynomial $x^2+x+1 \in \mathbb{Z}_2[x]$ and $U\equiv$ ${11}\choose{11}$. We first discuss the structure of the ring $M_2(\mathbb{Z}_4)$ and then focus on algebraic structure of cyclic codes and self-dual cyclic codes over $M_2(\mathbb{Z}_4)$. We obtain the  generators of the cyclic codes and their dual codes. Few examples are given at the end of the paper.
\end{abstract}
%%%%%%%%%%%%%%%%%%%%%%%%%%%%%%%%%%%%%%%%%%  keyword  %%%%%%%%%%%%%%%%%%%%%%%%%%%%%%%%%%%%%%%%%%%%%%%%%%%%%%
\begin{keyword} Codes over $\mathbb{Z}_4+u\mathbb{Z}_4$ \sep Gray map \sep Lee weight \sep Self-dual codes

\MSC: 94B05 \sep 94B15
\end{keyword}
%%%%%%%%%%%%%%%%%%%%%%%%%%%%%%%%%%%%%%   Introduction   %%%%%%%%%%%%%%%%%%%%%%%%%%%%%%%%%%%%%%%%%%%%%%%%%%%
\end{frontmatter}

%\linenumbers

\section{Introduction}
Codes over finite rings are very old on the other hand their applications in digital communication is rather young. Codes over rings have got the attention of researchers only after Hammons et. al. \cite{Hammons1994} in which they have shown an interesting relation between popular non-linear codes and linear codes over integer residue rings modulo 4, via a map called Gray map. This attracted the researchers to focus on codes over rings and their applications.  As a result so many new ring structures have been considered as code alphabets. However most of the study was restricted to codes  over commutative rings    \cite{Hammons1994,Pless1997,Pless1996}. Recently, codes over a non-commutative ring, the matrix ring over $\mathbb{Z}_2$, i.e., $M_2(\mathbb{Z}_2)$ has been considered as a code alphabet to study space time codes \cite{Oggier2012}. The advantage of this type of matrix rings is that they are non-commutative, which allows a quotient ring to have wide  left and right ideals (cyclic codes). This is also true in the case of skew polynomial rings, however, the polynomial factorization is a big hurdle in the construction of  codes over skew polynomial rings, which is not the case with codes over matrix rings. A notable work on cyclic codes over non-commutative finite rings is \cite{Bachoc1997,Greferath1999,Wisbauer1991}.

In \cite{Greferath1999}, the authors have studied cyclic codes over $M_2(\mathbb{Z}_2)$ and obtained some optimal codes over the same. Inspired by this work  Luo and Uday \cite{Luo2017} have obtained the structure of cyclic codes over $M_2(\mathbb{Z}_2+u\mathbb{Z}_2)$ and found some optimal cyclic codes. Motivated by this, in this paper, we explored the construction of codes over  $M_2(\mathbb{Z}_4)$. The reason for choosing $\mathbb{Z}_4$ is that  $\mathbb{Z}_4$ is best suited for the construction of modular lattices and also the relation established by Hammons et. al. \cite{Hammons1994} between binary non-linear codes and linear codes over $\mathbb{Z}_4$.  The approach which is being used in this paper to cyclic codes over $M_2(\mathbb{Z}_4)$ is same as that of cyclic codes over  $M_2(\mathbb{Z}_2)$, however, it is not straight forward as it can be seen later in the paper.

The paper is organised as follows: In Section 2, we describe the structure of $M_2(\mathbb{Z}_4)$ and  show that $M_2(\mathbb{Z}_4)$ is isomorphic to $\mathbb{Z}_4[w]+U\mathbb{Z}_4[w]$, where $w$ is a root of the polynomial $x^2+x+1$ and $U\equiv$ ${11}\choose{11}$. 
%$U\equiv \left(\begin{array}{cc} 1 & 1 \\ 1 & 1  \end{array}\right)$. 
We define  a Gray map on $\mathbb{Z}_4[w]+U\mathbb{Z}_4[w]$ to $\mathbb{F}^4_4$ which preserves the Lee weight in $\mathbb{Z}_4[w]+U\mathbb{Z}_4[w]$ and Hamming weight in $\mathbb{F}^4_4$. In section 3, we discuss the structure of cyclic codes and prove some results on the dimension of cyclic codes. In section 4, we obtain the structure of dual cyclic codes and  also self-dual codes. 

%In section 5, we define Bachoc weight on $M_2(\mathbb{Z}_4)$ with the help of Bachoc weight on $M_2(\mathbb{Z}_2)$.
% define as if $M$ is a nonzero matrix of $M_2(\mathbb{F}_2)$ then its Bachoc weight is 2 if $M$ is singular matrix, 1 if $M$ is regular matrix and 0 if $M$ is a zero matrix of $M_2(\mathbb{Z}_2)$ \cite{Bachoc1997}.    

\section{Structure of $M_2(\mathbb{Z}_4)$}
Let us denote $\mathcal{R}=M_2(\mathbb{Z}_4)$. $\mathcal{R}$ is a non-commutative ring of matrices of order 2 over $\mathbb{Z}_4$. The set $\mathbb{Z}_4+X\mathbb{Z}_4+Y\mathbb{Z}_4+YX\mathbb{Z}_4$ forms a non-commutative finite ring with respect to component wise addition and multiplication defined in Table \ref{p1_table1}.
\begin{table}[ht]
\begin{center}
\begin{tabular}{|c|| c| c| c| c| }\hline $\mathcal{R}$
 $\cdot$  &  1   &  X  &  Y  &  YX \\ \hline \hline
1  &  1   &  X  &  0  &  0  \\ \hline
X  &  0   &  0  &  1  &  X  \\  \hline
Y  &  Y   &  YX &  0  &  0  \\ \hline
YX &  0   &  0  &  Y  &  YX \\ \hline
\end{tabular}
\caption{Multiplication rule of $\mathcal{R}$}\label{p1_table1}
\end{center}
\end{table} 
\vspace{-0.5cm}
%%%%%%%%%%%%%%%%%%%%%%%%%%%%%%%%%%%%%%%%%%%%%%%%%%%%%%%%%%%%%%%%%%%%%%%%%%%%%%%%%%%%%%%%%%%%%%%%%%%%%%%%%%%
\begin{lemma}\label{p1_lemma1}
The ring $M_2(\mathbb{Z}_4)$ is isomorphic to the ring $\mathbb{Z}_4+X\mathbb{Z}_4+Y\mathbb{Z}_4 + YX\mathbb{Z}_4$, i.e., $M_2(\mathbb{Z}_4) \equiv \mathbb{Z}_4+X\mathbb{Z}_4+Y\mathbb{Z}_4+YX\mathbb{Z}_4$.
\end{lemma}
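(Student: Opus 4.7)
The plan is to construct an explicit $\mathbb{Z}_4$-algebra isomorphism $\phi: R := \mathbb{Z}_4+X\mathbb{Z}_4+Y\mathbb{Z}_4+YX\mathbb{Z}_4 \to M_2(\mathbb{Z}_4)$ by identifying the four generators $1,X,Y,YX$ with the matrix units $E_{11},E_{12},E_{21},E_{22}$ respectively; concretely,
\[
 a_0\cdot 1 + a_1 X + a_2 Y + a_3 YX \;\longmapsto\; \begin{pmatrix} a_0 & a_1 \\ a_2 & a_3 \end{pmatrix}.
\]
Both sides are free $\mathbb{Z}_4$-modules of rank $4$ — $R$ on the basis $\{1,X,Y,YX\}$ and $M_2(\mathbb{Z}_4)$ on the matrix units — so $\phi$ is visibly a $\mathbb{Z}_4$-linear bijection. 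In particular it is automatically additive, and $|R|=|M_2(\mathbb{Z}_4)|=4^4$, so bijectivity is clear.

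The substantive step is verifying that $\phi$ respects multiplication. Since both multiplications are $\mathbb{Z}_4$-bilinear, it suffices to compare the $16$ products of basis elements. I would place each entry of Table~\ref{p1_table1} side by side with the corresponding matrix-unit product obtained from the rule $E_{ij}E_{k\ell}=\delta_{jk}E_{i\ell}$: for instance $X\cdot Y=1$ matches $E_{12}E_{21}=E_{11}$, $Y\cdot X=YX$ matches $E_{21}E_{12}=E_{22}$, $X\cdot X=0$ matches $E_{12}^{\,2}=0$, and the remaining thirteen entries agree in exactly the same fashion.

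A small subtlety worth flagging in the write-up is that the symbol ``$1$'' in Table~\ref{p1_table1} does \emph{not} denote the multiplicative identity of $R$; it is merely the label of the basis element corresponding to $E_{11}$. The true identity of $R$ is $1+YX$, reflecting $E_{11}+E_{22}=I_2$. This is precisely why the table records $1\cdot Y=0$ and $X\cdot 1=0$, entries that would otherwise look contradictory.

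I do not anticipate any real obstacle beyond this bookkeeping; the only conceptual ingredient is the correct choice of dictionary $1\leftrightarrow E_{11}$, $X\leftrightarrow E_{12}$, $Y\leftrightarrow E_{21}$, $YX\leftrightarrow E_{22}$. Once it is written down, the isomorphism is forced by matching the given multiplication table against the matrix-unit relations in $M_2(\mathbb{Z}_4)$.
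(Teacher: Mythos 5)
Your proposal is correct and is essentially the paper's own argument: your map is precisely the inverse of the paper's $f(A)=a+Xb+Yc+YXd$, and verifying multiplicativity on the sixteen basis products via $E_{ij}E_{k\ell}=\delta_{jk}E_{i\ell}$ is the same computation the paper performs by expanding $f(A)f(B)$ for general elements against Table~\ref{p1_table1}. Your observation that the symbol $1$ in that table is merely the label for $E_{11}$ (the identity of the ring being $1+YX$) is a worthwhile clarification that the paper leaves implicit.
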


\begin{proof}
We define a mapping $f:  M_2(\mathbb{Z}_4)  \longrightarrow  \mathbb{Z}_4+X\mathbb{Z}_4+Y\mathbb{Z}_4+YX\mathbb{Z}_4$ such that $f(A) = a+Xb+Yc+YXd$,
%\end{equation*}
where $A=\left(\begin{array}{cc} a & b  \\ c & d  \end{array}\right) \in M_2(\mathbb{Z}_4)$.
%, and
%\begin{center}$\left(\begin{array}{cc} 1 & 0 \\ 0 & 0  \end{array}\right)\longmapsto 1; ~~\left(\begin{array}{cc} 0 & 1 \\ 0 & 0  \end{array}\right)\longmapsto X;~~\left(\begin{array}{cc} 0 & 0 \\ 1 & 0  \end{array}\right)\longmapsto Y;~~\left(\begin{array}{cc} 0 & 0 \\ 0 & 1  \end{array}\right)\longmapsto YX.$
%\end{center}
 It is easy to see that $f(A+B)=f(A)+f(B)$. Now we show that $f(AB)=f(A)f(B)$.  Let $A=\left(\begin{array}{cc} a & b  \\ c & d  \end{array}\right)$  and $B=\left(\begin{array}{cc} a_1 & b_1  \\ c_1 & d_1  \end{array}\right)$. Then $AB=\left(\begin{array}{cc} aa_1+bc_1 & ab_1+bd_1  \\ ca_1+dc_1 & cb_1+dd_1  \end{array}\right)$. So $f(AB)=(aa_1+bc_1) +X(ab_1+bd_1) +Y(ca_1+dc_1) +YX(cb_1+dd_1)$.

Now,
\begin{equation*}
\begin{array}{rcl}
f(A)f(B) & = & (a +Xb +Yc +YXd)(a_1 +Xb_1 +Yc_1 +YXd_1)\\
                & = & (aa_1+bc_1) +X(ab_1+bd_1) +Y(ca_1+dc_1) +YX(cb_1+dd_1 ) \\
                & = & f(AB).
\end{array}
\end{equation*}
It is easy to see that $f$ is one-one and onto. 
Hence $M_2(\mathbb{Z}_4) \equiv \mathbb{Z}_4+X\mathbb{Z}_4+Y\mathbb{Z}_4+YX\mathbb{Z}_4$.
\end{proof}
%%%%%%%%%%%%%%%%%%%%%%%%%%%%%%%%%%%%%%%%%%%%%%%%%%%%%%%%%%%%%%%%%%%%%%%%%%%%%%%%%%%%%%%%%%%%%%%%%%%%%%%%%%%
We consider a subset  of $\mathbb{Z}_4 + X\mathbb{Z}_4 + Y\mathbb{Z}_4$ $+YX\mathbb{Z}_4$, namely $W = \lbrace 0$, $X+Y3+YX3$, $X2+Y2+YX2$, $X3+Y+YX$, $1+X+YX3$, $2+X2+YX2$, $3+X3+YX$, $1+YX$, $2+YX2$, $3+YX3$, $1+X2+Y2+YX3$, $3+X2+Y2+YX$, $1+X3+Y+YX2$, $2+X3+Y+YX3$, $2+X+Y3+YX$, $3+X+Y3+YX3 \rbrace$.

\begin{lemma}\label{p1_lemma2}
The subset $W$ forms a commutative ring with respect to component wise addition and multiplication defined on $\mathbb{Z}_4+X\mathbb{Z}_4+Y\mathbb{Z}_4+YX\mathbb{Z}_4$. 
\end{lemma}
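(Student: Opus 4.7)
The plan is to recognise $W$ as the subring $\mathbb{Z}_4[w]$ of $\mathcal{R}$ generated over the identity by an element $w$ satisfying $w^2+w+1=0$, from which commutativity and closure fall out automatically. First I would identify the multiplicative identity of $\mathcal{R}$ under the isomorphism of Lemma~\ref{p1_lemma1}: the identity matrix maps to $e := 1 + YX$, and a quick check against Table~\ref{p1_table1} (using $1\cdot 1 = 1$, $YX\cdot YX = YX$, $YX\cdot Y = Y$, etc.) confirms that $e$ is a two-sided identity for $\mathcal{R}$. The distinguished element I would single out is
$$w := X + 3Y + 3YX,$$
which under the same isomorphism is the image of the companion matrix of $x^2+x+1 \in \mathbb{Z}_4[x]$.

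The substantive calculation is then the single relation
$$w^2 \;=\; 3e + 3w, \qquad \text{equivalently } w^2 + w + e = 0,$$
obtained by expanding $(X+3Y+3YX)^2$ using Table~\ref{p1_table1} and reducing modulo $4$. Granted this, the subset $\mathbb{Z}_4[w] := \{ae + bw : a, b \in \mathbb{Z}_4\}$ is trivially closed under addition, and the multiplication formula
$$(ae + bw)(a'e + b'w) \;=\; (aa' + 3bb')\,e \;+\; (ab' + a'b + 3bb')\,w$$
shows that it is also closed under multiplication; commutativity follows because the right-hand side is visibly symmetric in the two factors. Linear independence of $e$ and $w$ over $\mathbb{Z}_4$—clear from their matrix representatives, the identity matrix and a matrix with nonzero off-diagonal entries—ensures that $|\mathbb{Z}_4[w]| = 16$.

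The remaining step is bookkeeping: enumerate $ae + bw$ as $(a,b)$ ranges over $\mathbb{Z}_4 \times \mathbb{Z}_4$ and match the sixteen resulting elements against the list defining $W$. The main obstacle is really just this matching together with the one-line verification $w^2 + w + e = 0$; once the structural description $W = \mathbb{Z}_4[w]$ is in hand, the ring axioms (associativity, distributivity, additive inverses) are inherited from $\mathcal{R}$, and what remains—closure and commutativity—is explicit from the displayed product formula. As a byproduct, $W$ is identified with the Galois ring $\mathbb{Z}_4[x]/(x^2+x+1)$, which is exactly the factor $\mathbb{Z}_4[w]$ appearing in the decomposition $\mathcal{R} \cong \mathbb{Z}_4[w] + U\mathbb{Z}_4[w]$ advertised in the abstract.
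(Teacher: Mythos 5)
Your argument is correct, and it takes a genuinely different route from the paper. The paper's proof is a brute-force verification: it simply asserts that $W$ is an abelian group under addition and refers the reader to the full $16\times 16$ multiplication table (Table~\ref{p1_table2}) for the remaining ring axioms and commutativity. You instead exhibit $W$ structurally as $\{ae+bw : a,b\in\mathbb{Z}_4\}$ with $e=1+YX$ the identity of $\mathcal{R}$ and $w=X+Y3+YX3$ satisfying $w^2+w+e=0$, so that closure and commutativity follow from one computation plus the symmetric product formula $(ae+bw)(a'e+b'w)=(aa'+3bb')e+(ab'+a'b+3bb')w$; the remaining axioms are inherited from $\mathcal{R}$. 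I checked the key relation: under the isomorphism of Lemma~\ref{p1_lemma1}, $w$ corresponds to $\left(\begin{smallmatrix}0&1\\3&3\end{smallmatrix}\right)$, whose square is $\left(\begin{smallmatrix}3&3\\1&0\end{smallmatrix}\right)$, i.e.\ $3e+3w$, and the same result falls out of Table~\ref{p1_table1}. Your approach buys more than the paper's: it replaces $256$ table entries by one identity, it proves Lemma~\ref{p1_lemma4} ($W\cong\mathbb{Z}_4[w]=GR(4,2)$) simultaneously rather than as a separate table-checking step, and the final enumeration of $ae+bw$ actually exposes a few typographical slips in the displayed list of $W$ (e.g.\ the element written $1+X+YX3$ should be $e+w=1+X+Y3$, and $3+X+Y3+YX3$ should be $3e+w=3+X+Y3+YX2$, as the map in Lemma~\ref{p1_lemma4} confirms). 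The only point to make explicit when writing this up is that integer scalars are central in $\mathcal{R}$ and that $e$, not $1$, is the multiplicative identity (indeed $1\cdot Y=0$ in Table~\ref{p1_table1}), both of which you use implicitly in expanding the product.
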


\begin{proof}
One can easily verify that $W$ is an abelian group under component wise addition. The other criteria of a ring can be verified using the Table \ref{p1_table2}. For simplicity, we use the following notation:\\
$\begin{array}{cccc}
a_0=0, & a_{1}= X+Y3+YX3, & a_{2}= X2+Y2+YX2, & a_{3}= X3+Y+YX, \\
a_{4}=1+X+YX3, & a_{5}=2+X2+YX2, & a_{6}= 3+X3+YX, & a_{7}= 1+YX, \\
a_{8}=2+YX2, & a_{9}= 3+YX3, & a_{10}= 1+X2+Y2+YX3, & a_{11}=3+X2+Y2+YX, \\
a_{12}=1+X3+Y+YX2, & a_{13}=2+X3+Y+YX3, & a_{14}=2+X+Y3+YX, & a_{15}=3+X+Y3+YX3.  
\end{array}$
%$a_0=0$, $a_{1}= X+Y3+YX3$, $a_{2}= X2+Y2+YX2$, $a_{3}= X3+Y+YX$, $a_{4}= 1+X+YX3$, $a_{5}=2+X2+YX2$, $a_{6}= 3+X3+YX$, $a_{7}= 1+YX$, $a_{8}=2+YX2$, $a_{9}= 3+YX3$, $a_{10}= 1+X2+Y2+YX3$, $a_{11}=3+X2+Y2+YX$, $a_{12}=1+X3+Y+YX2$, $a_{13}=2+X3+Y+YX3$, $a_{14}=2+X+Y3+YX$, $a_{15}=3+X+Y3+YX3$. 

\end{proof}

\begin{center}
\begin{table}[ht]
%\caption{Multiplication table of $W$}\label{p1_table2}
\begin{tabular}{| c|| c| c| c| c| c| c| c| c| c| c| c| c| c| c| c| c | }\hline
$\cdot$ & 0 & $a_1$  & $a_2$ &$a_3$ & $a_4$ & $a_5$ &$a_6$ &$a_7$ &$a_8$ &$a_9$ &$a_{10}$ &$a_{11}$ &$a_{12}$ &$a_{13}$ & $a_{14}$ & $a_{15}$ \\\hline\hline  
 0  &  0  &  0 &  0 & 0  &  0  &  0 &  0 & 0  &  0  &  0 &  0 & 0  &  0  &  0 &  0 & 0\\  \hline
$a_1$ & 0 & $a_6$ &$a_5$ &$a_4$ &$a_9$ &$a_8$ &$a_7$ & $a_1$ &$a_2$ &$a_3$ &$a_{13}$ & $a_{14}$ &$a_{10}$ &$a_{12}$ & $a_{15}$&$a_{11}$ \\ \hline
  
$a_2$ &  0 & $a_5$ & 0 &$a_5$ &$a_8$ & 0 &$a_8$ &$a_2$ & 0 &$a_2$ &$a_2$ &$a_2$ & $a_{8}$ &$a_5$ &$a_5$ &$a_8$\\ \hline 
 
$a_3$ &  0 & $a_4$ &$a_5$ &$a_6$ &$a_7$ &$a_8$ &$a_9$ &$a_3$ &$a_2$ & $a_1$ & $a_{14}$ &$a_{13}$ &$a_{11}$ & $a_{15}$ &$a_{12}$ &$a_{10}$\\ \hline 
 
$a_4$ &  0 & $a_9$ &$a_8$ &$a_7$ & $a_1$ &$a_2$ & $a_3$ &$a_4$ &$a_5$ &$a_6$ & $a_{15}$ &$a_{12}$ & $a_{14}$  & $a_{11}$ &$a_{10}$ &$a_{13}$\\ \hline
 
$a_5$ &  0 & $a_8$ & 0 &$a_8$ &$a_2$ & 0 &$a_2$ &$a_5$ & 0 &$a_5$ &$a_5$ &$a_5$ &$a_5$ &$a_8$ &$a_8$ &$a_2$\\ \hline
 
$a_6$ &  0 & $a_7$ &$a_8$ &$a_9$ &$a_3$ &$a_2$ & $a_1$ &$a_6$ &$a_5$ &$a_4$ &$a_{12}$ & $a_{15}$ &$a_{13}$  &$a_{10}$ &$a_{11}$ & $a_{14}$\\ \hline
 
$a_7$ &  0  &  $a_1$  & $a_2$ &$a_3$ &$a_4$ &$a_5$ &$a_6$ &$a_7$ &$a_8$ &$a_9$ &$a_{10}$ &$a_{11}$ &$a_{12}$ &$a_{13}$ & $a_{14}$ & $a_{15}$ \\ \hline 
 
$a_8$ &  0  & $a_2$  &  0 &$a_2$ &$a_5$ & 0 &$a_5$ &$a_8$ & 0 &$a_8$ &$a_8$ &$a_8$ &$a_5$ &$a_2$ &$a_2$ &$a_5$ \\ \hline %10
 
$a_9$ &  0  & $a_3$  & $a_2$ & $a_1$ &$a_6$ &$a_5$ &$a_4$ &$a_9$ &$a_8$ &$a_7$ &$a_{11}$ &$a_{10}$ & $a_{15}$ & $a_{14}$ &$a_{13}$ &$a_{12}$ \\ \hline
 
$a_{10}$ &  0  & $a_{13}$  & $a_2$ & $a_{14}$ & $a_{15}$ &$a_5$ &$a_{12}$ &$a_{10}$ &$a_8$ &$a_{11}$ &$a_7$ &$a_9$ &$a_6$ & $a_1$ &$a_3$ &$a_4$ \\ \hline

$a_{11}$ &  0  &  $a_{14}$  & $a_2$ &$a_{13}$ &$a_{12}$ &$a_5$ & $a_{15}$ &$a_{11}$ &$a_8$ &$a_{10}$ &$a_9$ &$a_7$ &$a_4$ & $a_3$ & $a_1$ &$a_6$ \\ \hline %13

$a_{12}$ & 0 &$a_{10}$ &$a_8$ &$a_{11}$ & $a_{14}$ &$a_5$ & $a_{13}$  &$a_{12}$ &$a_5$ & $a_{15}$ &$a_6$ &$a_4$ & $a_1$ &$a_7$ &$a_9$ &$a_3$ \\ \hline %14

$a_{13}$ & 0 &$a_{12}$ & $a_5$ & $a_{15}$ &$a_{11}$ &$a_8$ &$a_{10}$ &$a_{13}$ &$a_2$ & $a_{14}$ & $a_1$ &$a_3$ &$a_7$ & $a_6$ &$a_4$ &$a_9$ \\ \hline %15

$a_{14}$ & 0 & $a_{15}$ &$a_5$ &$a_{12}$ &$a_{10}$ &$a_8$ &$a_{11}$ & $a_{14}$ &$a_2$ &$a_{13}$ &$a_3$ & $a_1$ &$a_9$ &$a_4$ &$a_6$ &$a_7$ \\ \hline %15

$a_{15}$ & 0 &$a_{11}$ &$a_8$ &$a_{10}$ &$a_{13}$ &$a_2$ & $a_{14}$ & $a_{15}$ &$a_5$ &$a_{12}$ &$a_4$ & $a_6$ &$a_3$ &$a_9$ &$a_7$ & $a_1$ \\ \hline
\end{tabular}\caption{Multiplication table of $W$}\label{p1_table2}
\end{table}
\end{center}
\vspace{-0.8cm}
%\end{proof}
%%%%%%%%%%%%%%%%%%%%%%%%%%%%%%%%%%%%%%%%%%%%%%%%%%%%%%%%%%%%%%%%%%%%%%%%%%%%%%%%%%%%%%%%%%%%%%%%%%%%%%%
We choose the element $1+X+Y+YX$ from $\mathbb{Z}_4+X\mathbb{Z}_4+Y\mathbb{Z}_4+YX\mathbb{Z}_4$ and denote it by $U$, i.e., $U = 1+X+Y+YX$. So $UW = \lbrace 0$, $3+Y3$, $2+Y2$, $1+Y$, $X+YX$, $X2+YX2$, $X3+YX3$, $1+X+Y+YX$, $2+X2+Y2+YX2$, $3+X3+Y3+YX3$, $3+X+Y3+YX$, $1+X3+Y+YX3$, $2+X+Y2+YX$, $3+X2+Y3+YX2$, $1+X2+Y+YX2$, $2+X3+Y2+YX3\rbrace$. This implies that $W \cap UW=\lbrace 0 \rbrace$, which inturn implies that $W+UW=\mathbb{Z}_4+X\mathbb{Z}_4+Y\mathbb{Z}_4+YX\mathbb{Z}_4$ as $W+UW$ is sub-ring of $\mathbb{Z}_4+X\mathbb{Z}_4+Y\mathbb{Z}_4+YX\mathbb{Z}_4$ and $\mid W+UW \mid =256$. Therefore 
$\mathbb{Z}_4+X\mathbb{Z}_4+Y\mathbb{Z}_4+YX\mathbb{Z}_4 = W+UW$.

%\begin{lemma}\label{p1_lemma3}
%The ring $\mathbb{Z}_4+X\mathbb{Z}_4+Y\mathbb{Z}_4+YX\mathbb{Z}_4$ is equal to the ring $W+UW$, i.e., $\mathbb{Z}_4+X\mathbb{Z}_4+Y\mathbb{Z}_4+YX\mathbb{Z}_4 = W+UW$.
%\end{lemma}
%
%\begin{proof}
%Since  $W = \lbrace 0$, $X+Y3+YX3$, $X2+Y2+YX2$, $X3+Y+YX$, $1+X+YX3$, $2+X2+YX2$, $3+X3+YX$, $1+YX$, $2+YX2$, $3+YX3$, $1+X2+Y2+YX3$, $3+X2+Y2+YX$, $1+X3+Y+YX2$, $2+X3+Y+YX3$, $2+X+Y3+YX$, $3+X+Y3+YX3 \rbrace$ \\
% and $UW = \lbrace 0$, $3+Y3$, $2+Y2$, $1+Y$, $X+YX$, $X2+YX2$, $X3+YX3$, $1+X+Y+YX$, $2+X2+Y2+YX2$, $3+X3+Y3+YX3$, $3+X+Y3+YX$, $1+X3+Y+YX3$, $2+X+Y2+YX$, $3+X2+Y3+YX2$, $1+X2+Y+YX2$, $2+X3+Y2+YX3\rbrace$.\\
% This show that $W\cap UW=\lbrace 0 \rbrace$ and $W+UW$ is a sub ring of $\mathbb{Z}_4+X\mathbb{Z}_4+Y\mathbb{Z}_4+YX\mathbb{Z}_4$. The order of the ring $W+UW$, i.e., $\mid W+UW \mid =256$. Therefore we have  
% $\mathbb{Z}_4+X\mathbb{Z}_4+Y\mathbb{Z}_4+YX\mathbb{Z}_4 = W+UW$.
%\end{proof}

%%%%%%%%%%%%%%%%%%%%%%%%%%%%%%%%%%%%%%%%%%%%%%%%%%%%%%%%%%%%%%%%%%%%%%%%%%%%%%%%%%%%%%%%%%%%%%%%%%%%%%%%%%%
Let $x^2+x+1$ be a basic irreducible polynomial over $\mathbb{Z}_4$. Then $\dfrac{\mathbb{Z}_4[x]}{\langle x^2+x+1 \rangle}$ is called the Galois extension ring of $\mathbb{Z}_4$ and is denoted by GR(4,2). If $w$ is a root of $x^2+x+1$ then $\dfrac{\mathbb{Z}_4[x]}{\langle x^2+x+1 \rangle} = GR(4,2) \cong \mathbb{Z}_4[w]$. 
%%%%%%%%%%%%%%%%%%%%%%%%%%%%%%%%%%%%%%%%%%%%%%%%%%%%%%%%%%%%%%%%%
\begin{lemma}\label{p1_lemma4}
The ring $W$ is isomorphic to the ring $\mathbb{Z}_4[w]$, i.e., $W\cong\mathbb{Z}_4[w]$.
\end{lemma}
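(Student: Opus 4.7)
The plan is to construct an explicit ring isomorphism $\phi \colon \mathbb{Z}_4[w] \to W$ by realising a generator of $\mathbb{Z}_4[w]$ inside $W$. The conceptual background is that $\mathbb{Z}_4[w] = GR(4,2)$ is the unique Galois extension of $\mathbb{Z}_4$ of degree two, so any commutative $\mathbb{Z}_4$-algebra of order $16$ that is generated over $\mathbb{Z}_4$ by a root of a basic irreducible polynomial of degree $2$ is automatically isomorphic to $\mathbb{Z}_4[w]$.

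First I would pin down the $\mathbb{Z}_4$-algebra structure on $W$. Via the isomorphism of Lemma~\ref{p1_lemma1}, the element $a_7 = 1 + YX$ corresponds to the identity matrix $I_2 \in M_2(\mathbb{Z}_4)$ and hence is the multiplicative identity of $W$; the subset $\{a_0, a_7, a_8, a_9\} = \{k a_7 : k \in \mathbb{Z}_4\}$ is then the natural copy of $\mathbb{Z}_4$ inside $W$. Next I would scan the diagonal of Table~\ref{p1_table2} and compute $a_i^2 + a_i$ for each $i$, so as to locate an element $z \in W$ satisfying $z^2 + z + a_7 = 0$, i.e., a root of $x^2 + x + 1$ in $W$. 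Should the search for such a $z$ fail with this particular polynomial, one would instead look for a root of some other basic irreducible polynomial of degree $2$ over $\mathbb{Z}_4$ (namely one of $x^2+x+3$, $x^2+3x+1$, or $x^2+3x+3$), since all four of these polynomials determine the same Galois ring $GR(4,2) \cong \mathbb{Z}_4[w]$.

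Having fixed such a $z$, I would then set
\[
\phi \colon \mathbb{Z}_4[w] \to W, \qquad \phi(\alpha + \beta w) = \alpha a_7 + \beta z \quad (\alpha, \beta \in \mathbb{Z}_4).
\]
Additivity is immediate. Multiplicativity is a direct computation: $(\alpha_1 + \beta_1 w)(\alpha_2 + \beta_2 w)$ expands in $\mathbb{Z}_4[w]$ using $w^2 = -w - 1$, while $(\alpha_1 a_7 + \beta_1 z)(\alpha_2 a_7 + \beta_2 z)$ expands in $W$ using $z^2 = -z - a_7$, and the two expressions agree term by term. Injectivity reduces to showing that $\{a_7, z\}$ is $\mathbb{Z}_4$-linearly independent in the ambient ring, which is verified by writing an arbitrary combination in the additive basis $\{1, X, Y, YX\}$ and reading off that one of the coordinates forces both scalars to vanish. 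Since $|\mathbb{Z}_4[w]| = 16 = |W|$, injectivity upgrades to bijectivity, and $\phi$ is the required isomorphism.

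The principal obstacle is the second step: locating an element $z \in W$ that plays the role of $w$. Although the search is finite (only the fifteen nonzero diagonal entries of Table~\ref{p1_table2} need to be inspected), care is needed because the multiplicative identity of $W$ is $a_7 = 1 + YX$ rather than the symbol $1$ inherited from the ambient ring $\mathbb{Z}_4 + X\mathbb{Z}_4 + Y\mathbb{Z}_4 + YX\mathbb{Z}_4$, so the relation $z^2 + z + 1_W = 0$ must be read as $z^2 + z + a_7 = 0$.
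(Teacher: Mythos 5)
Your proposal is correct, and it verifies the isomorphism by a different (and in fact more economical) route than the paper. The paper's proof simply lists where each of the sixteen elements of $W$ is sent in $\mathbb{Z}_4[w]$ and asserts that Table~\ref{p1_table2} shows the map is a ring isomorphism; taken literally this requires checking all pairwise products against the table. You instead exploit the presentation $\mathbb{Z}_4[w]\cong\mathbb{Z}_4[x]/\langle x^2+x+1\rangle$: once you have located the identity $a_7=1+YX$ of $W$ (correctly identified as the image of $I_2$ under the map of Lemma~\ref{p1_lemma1}) and a single element $z$ with $z^2+z+a_7=0$, the universal property hands you the homomorphism $\alpha+\beta w\mapsto \alpha a_7+\beta z$ for free, and the whole verification collapses to one quadratic relation plus $\mathbb{Z}_4$-linear independence of $\{a_7,z\}$ and a cardinality count. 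Concretely $z=a_1=X+Y3+YX3$ works: as a matrix it is $\bigl(\begin{smallmatrix}0&1\\3&3\end{smallmatrix}\bigr)$, whose square is $\bigl(\begin{smallmatrix}3&3\\1&0\end{smallmatrix}\bigr)=3+X3+Y$, and indeed $(3+X3+Y)+(X+Y3+YX3)+(1+YX)=0$; linear independence is read off the $X$ and $1$ coordinates exactly as you describe. Your contingency plan of falling back to a different basic irreducible quadratic is unnecessary (if $W\cong GR(4,2)$ at all, it must contain a root of $x^2+x+1$), but it is harmless. A side benefit of your approach is that it is robust against transcription errors: the paper's element list for $W$ and its table of $a_i$'s disagree in a couple of entries (e.g.\ $a_6$ appears both as $3+X3+YX$ and as $3+X3+Y$), and a targeted check of the single relation $z^2+z+a_7=0$ would expose and resolve such inconsistencies, whereas the paper's wholesale appeal to the table conceals them.
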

\begin{proof}
We consider the mapping as follows: $0\longmapsto 0$, $X+Y3+YX3\longmapsto w$, $X2+Y2+YX2\longmapsto 2w$, $X3+Y+YX\longmapsto 3w$, $1+X+Y3\longmapsto 3w^2$, $2+X2+Y2\longmapsto 2w^2$, $3+X3+Y\longmapsto w^2$, $1+YX\longmapsto 1$, $2+YX2\longmapsto 2$, $3+YX3\longmapsto 3$, $1+X2+Y2+YX3\longmapsto 2w+1$, $3+X2+Y2+YX\longmapsto 2w+3$, $1+X3+Y+YX2\longmapsto 3w+1$, $2+X3+Y+YX3\longmapsto 3w+2$, $2+X+Y3+YX\longmapsto w+2$, $3+X+Y3+YX2\longmapsto w+3$. It is clear from Table \ref{p1_table2} that this map is a ring isomorphism. Therefore $W\cong\mathbb{Z}_4[w]$.
\end{proof}
%%%%%%%%%%%%%%%%%%%%%%%%%%%%%%%%%%%%%%%%%%%%%%%%%%%%%%%%%%%%%%%%%%%%%%%%%%%%%%%%%%%%%%%%%%%%%%%%%%%%%%%%%%%
\begin{thm}\label{p1_thm1}
The ring $M_2(\mathbb{Z}_4)$ is isomorphic to the ring $\mathbb{Z}_4[w]+U\mathbb{Z}_4[w]$, i.e., $M_2(\mathbb{Z}_4) \cong\mathbb{Z}_4[w]+U\mathbb{Z}_4[w]$.
\end{thm}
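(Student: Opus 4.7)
The plan is to chain together the structural results already obtained in this section. Lemma \ref{p1_lemma1} supplies a ring isomorphism $f : M_2(\mathbb{Z}_4) \to \mathcal{S}$, where $\mathcal{S} := \mathbb{Z}_4 + X\mathbb{Z}_4 + Y\mathbb{Z}_4 + YX\mathbb{Z}_4$, and the paragraph immediately preceding the theorem establishes the additive direct-sum decomposition $\mathcal{S} = W \oplus UW$ (via $W \cap UW = \{0\}$ together with the cardinality count $|W \oplus UW| = 16 \cdot 16 = 256 = |\mathcal{S}|$). Lemma \ref{p1_lemma4} supplies the ring isomorphism $\phi : W \to \mathbb{Z}_4[w]$.

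Using these ingredients, I would define
\begin{equation*}
\Phi : \mathcal{S} \to \mathbb{Z}_4[w] + U\mathbb{Z}_4[w], \qquad \Phi(r + Us) = \phi(r) + U\phi(s), \quad r, s \in W,
\end{equation*}
where the symbol $U$ on the right still denotes the same element $1 + X + Y + YX$. The direct sum guarantees $\Phi$ is well defined, and bijectivity is immediate from bijectivity of $\phi$ together with $|W| = |UW| = 16$. The composition $\Phi \circ f$ will then be the desired isomorphism.

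The bulk of the work is verifying that $\Phi$ preserves multiplication. Expanding
\begin{equation*}
(r_1 + Us_1)(r_2 + Us_2) = r_1 r_2 + r_1 U s_2 + U s_1 r_2 + U s_1 U s_2,
\end{equation*}
every summand must be returned to the canonical $W \oplus UW$ form. Closure of $W$ (Lemma \ref{p1_lemma2}) handles $r_1 r_2$ and also the factor $s_1 r_2$ inside $U s_1 r_2 \in UW$; the remaining cross terms require two structural facts: (i) the identity $U^2 = 2U$, read off at once from $\left(\begin{smallmatrix} 1 & 1 \\ 1 & 1 \end{smallmatrix}\right)^2 = \left(\begin{smallmatrix} 2 & 2 \\ 2 & 2 \end{smallmatrix}\right)$, and (ii) a commutation rule expressing $rU$ (for $r \in W$) uniquely in $W \oplus UW$ form, which can be extracted element-by-element from the multiplication table on $\mathcal{S}$.

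The main obstacle will be item (ii): pinning down the commutation rules between $U$ and $W$ in a usable closed form and checking that the same relations govern the target ring $\mathbb{Z}_4[w] + U\mathbb{Z}_4[w]$ (which in effect defines the multiplication there). Once these relations are recorded, multiplicativity of $\Phi$ becomes a direct term-by-term comparison with the product of $\phi(r_1) + U\phi(s_1)$ and $\phi(r_2) + U\phi(s_2)$, and the isomorphism $M_2(\mathbb{Z}_4) \cong \mathbb{Z}_4[w] + U\mathbb{Z}_4[w]$ follows.
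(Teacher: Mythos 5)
Your proposal follows essentially the same route as the paper: compose the isomorphism of Lemma \ref{p1_lemma1} with the decomposition $W \oplus UW$ of $\mathbb{Z}_4+X\mathbb{Z}_4+Y\mathbb{Z}_4+YX\mathbb{Z}_4$ and the isomorphism of Lemma \ref{p1_lemma4}. You are in fact more careful than the paper, which simply asserts $W+UW \cong \mathbb{Z}_4[w]+U\mathbb{Z}_4[w]$ without checking that the additive bijection $r+Us \mapsto \phi(r)+U\phi(s)$ respects multiplication; your item (ii) --- recording the commutation rule between $U$ and the elements of $W$ and matching it against the relations imposed on the target ring --- is precisely the verification the paper leaves implicit.
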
 

\begin{proof}
We have 
\begin{eqnarray*} 
M_2(\mathbb{Z}_4) & \cong  & \mathbb{Z}_4+X\mathbb{Z}_4+Y\mathbb{Z}_4+YX\mathbb{Z}_4 ~~\mbox{from Lemma}~ \ref{p1_lemma1}\\
&\cong & W+UW \\
& \cong & \mathbb{Z}_4[w]+U\mathbb{Z}_4[w]  ~~\mbox{from Lemma}~ \ref{p1_lemma4}
\end{eqnarray*}
%$M_2(\mathbb{Z}_4) \cong \mathbb{Z}_4+X\mathbb{Z}_4+Y\mathbb{Z}_4+YX\mathbb{Z}_4 $ \ref{p1_lemma1} $\cong W+UW$ \ref{p1_lemma2} $\cong\mathbb{Z}_4[w]+U\mathbb{Z}_4[w]$\ref{p1_lemma4}.\\
Therefore $M_2(\mathbb{Z}_4) \cong\mathbb{Z}_4[w]+U\mathbb{Z}_4[w]$.
\end{proof}
%%%%%%%%%%%%%%%%%%%%%%%%%%%%%%%%%%%%%%%%%%%%%%%%%%%%%%%%%%%%%%%%%%%%%%%%%%%%%%%%%%%%%%%%%%%%%%%%%%%%%%%%%%%
We notice here that the rings $W$ and $\mathbb{Z}_4[w]$ are commutative, however, their extensions, both $W+UW$ and $\mathbb{Z}_4[w]+U\mathbb{Z}_4[w]$ are non-commutative. Summarising the  above discussion, we have $\mathcal{R} \cong \mathbb{Z}_4[w]+U\mathbb{Z}_4[w]$, where $U^2=2U$, $U^3=0$, $2U=U2$ and $2U^2=0$.

We know that each element of $\mathbb{Z}_4$ has 2-adic representation $a+2b$, where $a,~b \in \mathbb{Z}_2$, so is $\mathbb{Z}_4[w]$. Now we define a Gray map on $\mathcal{R}$. For this, first we define a mapping $\mathcal{R}$ to $\mathbb{Z}^2_4[w]$, and then define a mapping $\mathbb{Z}^2_4[w]$ to $\mathbb{F}^4_4$ so that the Gray map is 
%To a Gray map from $\mathcal{R}$ to $\mathbb{F}^4_4$ first we define a map $\mathcal{R}$ to $M^2_2(\mathbb{F}_4)$, then define mapping $M^2_2(\mathbb{F}_4)$ to $\mathbb{F}^4_4$. So the Gray map

\begin{equation*}
\begin{array}{cccc}
\Phi: & \mathcal{R}         & \longrightarrow & \mathbb{F}_4^4\\
      & a+2b+Uc+U2d        & \longmapsto & (d,~ c+d,~ b+d,~ a+b+c+d),
\end{array}
\end{equation*}
where $a, b, c, d\in \mathbb{F}_4$. This map can easily be extended to $\mathcal{R}^n$ component wise. The Hamming weight $w_H$ of $x \in \mathbb{F}_4^n$ is defined as the number of non-zero coordinates of $x$. For $x=a+2b+Uc+U2d \in \mathcal{R}^n$, we define  the Lee weight of $x$, denoted by $w_L(x)$,  as $$w_L(x)=w_H(d)+w_H(d+c) + w_H(d+b)+w_H(a+b+c+d).$$ For any $x$, $y \in \mathcal{R}^n$, the Lee distance $d_L(x, y)$  between $x$ and $y$ is the Lee weight of $x-y$, i.e., $d_L(x,y)=w_L(x-y)$. 
A linear code $C$ of length $n$ over $\mathcal{R}$ is  an $\mathcal{R}$-submodule of $\mathcal{R}^n$. $C$ is said to be a {\it{free code}} if $C$ has a $\mathcal{R}$-basis. We define the \emph{rank} of a code $C$ as the minimum number of generators for $C$. The Lee distance of $C$ is denoted by $d_L(C)$ and is defined by $d_L(C)=min \lbrace  w_L(c)=\sum^{n-1}_{i=0 }w_L(c_i)\vert c=(c_0, c_1,\dots, c_{n-1})\in C\rbrace$. From the above discussion, we can easily verify the following theorem.

%%%%%%%%%%%%%%%%%%%%%%%%%%%%%%%%%%%%%%%%%%%%%%%%%%%%%%%%%%%%%%%%%%%%%%%%%%%%%%%%%%%%%%%%%%%%%%%%%%%%%%%%%%%%%
\begin{thm}\label{p1_thm2}
If $C$ is a linear code over $\mathcal{R}$ of length $n$, size $M$ with Lee distance $d_L$, then $\Phi(C)$ is a code  of length $4n$ over $\mathbb{F}_4$, size $M$.
\end{thm}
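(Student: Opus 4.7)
The theorem asserts three things about $\Phi(C)$: it lies in $\mathbb{F}_4^{4n}$ (length $4n$), it lives over $\mathbb{F}_4$, and it has the same cardinality as $C$. The length and alphabet are immediate from the componentwise definition of $\Phi$, so the substance of the proof is to show that $|\Phi(C)| = |C| = M$, i.e., that $\Phi$ is injective.

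My plan is therefore to establish injectivity of $\Phi$ on $\mathcal{R}$, from which injectivity on $\mathcal{R}^n$ follows coordinatewise. The first step is to argue that every element of $\mathcal{R}$ has a \emph{unique} expression of the form $a + 2b + Uc + U\cdot 2d$ with $a,b,c,d \in \mathbb{F}_4$. By Theorem~\ref{p1_thm1} we have $\mathcal{R} \cong \mathbb{Z}_4[w] + U\mathbb{Z}_4[w]$, and using $U^2 = 2U$, $U^3 = 0$ together with $W \cap UW = \{0\}$, this sum is internal and direct. The 2-adic decomposition of each summand in $\mathbb{Z}_4[w]$ as $\alpha + 2\beta$ with $\alpha, \beta$ drawn from the Teichm\"uller-type set identified with $\mathbb{F}_4$ then pins down the representation uniquely.

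With the unique representation in hand, the injectivity reduces to a small linear check: if $\Phi(a+2b+Uc+U\cdot 2d) = (d,\, c+d,\, b+d,\, a+b+c+d) = 0$ in $\mathbb{F}_4^4$, then reading the coordinates in order forces $d = 0$, then $c = 0$, then $b = 0$, and finally $a = 0$. Hence $\Phi$ has trivial kernel on $\mathcal{R}$, and thus $\Phi \colon \mathcal{R}^n \to \mathbb{F}_4^{4n}$ is injective when applied coordinatewise. Restricting to $C$ gives $|\Phi(C)| = |C| = M$.

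The only delicate point I expect is the uniqueness of the representation $a + 2b + Uc + U\cdot 2d$: one must be careful that the four ``coordinates'' really do come from an additive direct-sum decomposition of $\mathcal{R}$ as $\mathbb{F}_4 \oplus 2\mathbb{F}_4 \oplus U\mathbb{F}_4 \oplus U\cdot 2\mathbb{F}_4$, rather than merely a spanning set. Once that is secured via Theorem~\ref{p1_thm1} and the structural identities $U^2 = 2U$, $2U^2 = 0$, the rest is bookkeeping, and the theorem follows. (Note that linearity of $\Phi(C)$ as an $\mathbb{F}_4$-code is not claimed and need not be addressed here.)
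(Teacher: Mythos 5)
Your proof is correct in substance, and it is worth pointing out that the paper itself offers no proof of this theorem at all --- it is introduced only with ``from the above discussion, we can easily verify the following theorem'' --- so you are supplying the argument the authors leave implicit rather than diverging from one. Your route is the natural one: the length and alphabet claims are immediate, and the real content is injectivity of $\Phi$, which you correctly reduce to (i) uniqueness of the representation $a+2b+Uc+U2d$ with $a,b,c,d\in\mathbb{F}_4$, secured by $W\cap UW=\{0\}$ together with $|W+UW|=256$ and the $2$-adic decomposition of $\mathbb{Z}_4[w]$, and (ii) invertibility of the triangular coefficient map $(a,b,c,d)\mapsto(d,\,c+d,\,b+d,\,a+b+c+d)$ over $\mathbb{F}_4$. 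One phrase you should tighten: ``$\Phi$ has trivial kernel on $\mathcal{R}$'' is not by itself a valid justification, because $\Phi$ is not additive on $(\mathcal{R},+)$ --- the coefficients of the $2$-adic expansion do not add componentwise owing to carries (already in $\mathbb{Z}_4$, $1+1=2$ sends the coefficient pairs $(1,0)+(1,0)$ to $(0,1)$, not to $(0,0)$) --- so the principle ``trivial kernel implies injective'' does not apply to $\Phi$ directly. What your triangular computation actually establishes is that the $\mathbb{F}_4$-linear map on coefficient tuples is injective; composing that with the bijection $x\mapsto(a,b,c,d)$ furnished by the unique representation gives injectivity of $\Phi$, which is exactly what is needed. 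With that rephrasing the argument is complete and yields $|\Phi(C)|=|C|=M$, filling a gap the paper glosses over.
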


%%%%%%%%%%%%%%%%%%%%%%%%%%%%%%%%%%%%%%%%%%%%%%%%%%%%%%%%%%%%%%%%%%%%%%%%%%%%%%%%%%%%%%%%%%%%%%%%%%%%%%%%%%%
\section{Cyclic codes over $M_2(\mathbb{Z}_4)$}

Let $\tau$ be the standard cyclic shift operator on $\mathcal{R}^n$. A linear code $C$ of length $n$ over $\mathcal{R}$ is cyclic if $\tau(c) \in C$ whenever $c \in C$, \emph{i.e.,} if $(c_0, c_1, \ldots, c_{n-1}) \in C$, then $(c_{n-1}, c_0, c_1, \ldots, c_{n-2}) \in C$. As usual, in the polynomial representation, a cyclic code of length $n$ over $\mathcal{R}$ is an ideal of $\frac{\mathcal{R}[x]}{\left\langle x^n-1\right\rangle}$. Note here that $\frac{\mathcal{R}[x]}{\left\langle x^n-1\right\rangle}$ is a ring.\\

\begin{thm}\label{thm2.2}
A linear code $C = C_1 + UC_2$ of length $n$ over $\mathcal{R}$ is cyclic if and only if $C_1$, $C_2$ are cyclic codes of length $n$ over $\mathbb{Z}_4[w]$.
\end{thm}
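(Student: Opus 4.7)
The plan is to leverage the structural decomposition $\mathcal{R} \cong \mathbb{Z}_4[w] + U\mathbb{Z}_4[w]$ established in Theorem \ref{p1_thm1}. The key ingredient is that this decomposition is \emph{unique}: since the construction gives $W \cap UW = \{0\}$, every element $r \in \mathcal{R}$ can be written in one and only one way as $r = a + Ub$ with $a, b \in \mathbb{Z}_4[w]$. Applied coordinatewise, this produces a unique decomposition of every vector $c \in \mathcal{R}^n$ as $c = c_1 + Uc_2$ with $c_1, c_2 \in \mathbb{Z}_4[w]^n$. A second basic observation is that the cyclic shift $\tau$ acts coordinatewise and $U$ acts coordinatewise as a scalar, so they commute: $\tau(Uc_2) = U\tau(c_2)$.

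For the sufficient direction, I would start by assuming $C_1$ and $C_2$ are cyclic over $\mathbb{Z}_4[w]$. Take an arbitrary $c = c_1 + Uc_2 \in C$ with $c_1 \in C_1$ and $c_2 \in C_2$. Using the commutativity above,
\[
\tau(c) = \tau(c_1) + U\tau(c_2).
\]
Since $\tau(c_1) \in C_1$ and $\tau(c_2) \in C_2$ by hypothesis, $\tau(c)$ lies in $C_1 + UC_2 = C$, so $C$ is cyclic.

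For the necessary direction, assume $C$ is cyclic. Given any $c_1 \in C_1$, the element $c_1 = c_1 + U \cdot 0$ lies in $C$ (as $0 \in C_2$), so $\tau(c_1) \in C$. Write $\tau(c_1) = c_1' + Uc_2'$ with $c_1' \in C_1$, $c_2' \in C_2$. Since $\tau(c_1)$ has all coordinates in $\mathbb{Z}_4[w]$, the uniqueness of the decomposition forces $c_2' = 0$ and hence $\tau(c_1) = c_1' \in C_1$. The argument for $C_2$ is symmetric: for $c_2 \in C_2$, the element $Uc_2 \in C$ gives $\tau(Uc_2) = U\tau(c_2) \in C$; writing $U\tau(c_2) = c_1'' + Uc_2''$ and using uniqueness forces $c_1'' = 0$ and $\tau(c_2) = c_2'' \in C_2$.

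The only real subtlety, and the one I would be careful to flag, is the uniqueness of the decomposition $a + Ub$; without it, one cannot conclude that the ``$C_1$-component'' of $\tau(c_1)$ is $\tau(c_1)$ itself, nor that the ``$C_2$-component'' of $U\tau(c_2)$ is $\tau(c_2)$. This uniqueness is already built into the description of $\mathcal{R}$ given in Section 2, so once it is invoked the proof is essentially a bookkeeping exercise and the remaining steps are routine.
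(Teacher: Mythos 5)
Your proof is correct and follows essentially the same route as the paper's: both directions rest on the identity $\tau(c_1+Uc_2)=\tau(c_1)+U\tau(c_2)$ together with the decomposition $C=C_1+UC_2$. If anything, yours is the more careful version --- the paper's necessity argument passes directly from $\tau(c_1)+U\tau(c_2)\in C$ to $\tau(c_1)\in C_1$ and $\tau(c_2)\in C_2$ without comment, whereas you justify this by testing on $c_1+U\cdot 0$ and on $Uc_2$ separately and invoking the uniqueness of the coordinatewise decomposition coming from $W\cap UW=\{0\}$ (plus the injectivity of $b\mapsto Ub$), which is precisely the point the paper leaves implicit.
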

\begin{proof}
Let $c_1+Uc_2 \in C$, where $c_1 \in C_1$ and $c_2 \in C_2$. Then $\tau(c_1+Uc_2) = \tau(c_1) +U\tau(c_2) \in C$, since $C$ is cyclic and $\tau$ is a linear map. So, $\tau(c_1) \in C_1$ and $\tau(c_2) \in C_2$. Therefore $C_1, C_2$ are cyclic codes. Conversely if $C_1$, $C_2$ are cyclic codes, then for any $c_1+Uc_2 \in C$, where $c_1 \in C_1$ and $c_2 \in C_2$, we have $\tau(c_1) \in C_1$ and $\tau(c_2) \in C_2$, and so, $\tau(c_1+Uc_2) = \tau(c_1) +U\tau(c_2) \in C$. Hence $C$ is cyclic.
\end{proof}

We assume that $n$ is odd for  the rest of this paper. Let $\mathcal{R}[x]$ be the ring of polynomials over the ring $\mathcal{R}$. 
We define a mapping  
\begin{equation*}
\begin{array}{cccc}
\mu: & \mathcal{R}[x]          & \longrightarrow & \mathbb{F}_4[x] \\
     & \sum_{i=0}^n a_ix^i    & \longmapsto & \sum_{i=0}^n \mu(a_i)x^i,
\end{array}
\end{equation*}
where $\mu(a_i)$ denote reduction of modulo $2$ and $U$.

%%%%%%%%%%%%%%%%%%%%%%%%%%%%%%%%%%%%%%%%

A polynomial $f \in \mathcal{R}[x]$ is called {\it{basic irreducible polynomial}} if $\mu(f)$ is irreducible over $\mathbb{F}_4$. Two polynomials $f(x), g(x) \in \mathcal{R}[x]$ are said to be \emph{coprime} if there exist $a(x), b(x) \in \mathcal{R}[x]$ such that
\[a(x)f(x) + b(x)g(x) = 1~.\]

The polynomial $x^n-1$ factorizes uniquely into pairwise coprime irreducible polynomials over $\mathbb{F}_4$. Let $x^n-1=f_1f_2f_3\cdots f_m$, where $f_i$'s are irreducible polynomials over $\mathbb{F}_4$.

\begin{lemma} 
Let $f_i$ be a basic irreducible polynomial over $\mathcal{R}$. Then $\dfrac{\mathcal{R}[x]}{\langle f_i \rangle}$ is not a ring but a right module over $\mathcal{R}$.
\end{lemma}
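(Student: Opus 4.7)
The plan is to exploit the non-commutativity of $\mathcal{R}$, which propagates to $\mathcal{R}[x]$ even though the indeterminate $x$ is central. The key observation is that although the basic irreducible $f_i$ lifts from $\mathbb{F}_4[x]$ to a polynomial whose coefficients lie in the commutative subring $W \cong \mathbb{Z}_4[w] \subset \mathcal{R}$, this subring is \emph{not} contained in the center of $\mathcal{R}$. Indeed, if $W$ were central, then since $\mathcal{R}=W+UW$ (Theorem~\ref{p1_thm1}) and $U$ commutes with itself, the whole of $\mathcal{R}$ would be commutative, contradicting Lemma~\ref{p1_lemma1}. Hence there is some $c \in W$ and some $a \in \mathcal{R}$ (for instance one can try $a=U$ against a coefficient involving $w$) with $ac\neq ca$.

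First I would fix such a pair $(a,c)$ so that $c$ actually appears as a coefficient of $f_i$ — this is where one should pause, because if all coefficients of $f_i$ were to lie in $\mathbb{Z}_4\subset Z(\mathcal{R})$, the argument would collapse; the claim implicitly assumes $\deg f_i\geq 1$ with a genuinely non-central coefficient, which is the generic situation for the irreducible factors of $x^n-1$ over $\mathbb{F}_4$. Having chosen $a$, we have $af_i\neq f_ia$ as elements of $\mathcal{R}[x]$.

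Next I would show that, interpreting $\langle f_i\rangle$ as the right ideal $f_i\mathcal{R}[x]$, the element $af_i$ cannot lie in $\langle f_i\rangle$. Assume toward contradiction $af_i = f_i\,g(x)$ with $g(x)\in\mathcal{R}[x]$. Taking $f_i$ monic of degree $d$, the left-hand side has degree $d$, so $g(x)$ must be a constant $g\in\mathcal{R}$; comparing leading coefficients forces $g=a$, and then $af_i=f_ia$, contradicting the choice of $a$. Therefore $\langle f_i\rangle$ is a right ideal but fails to be closed under left multiplication by $\mathcal{R}$, so it is not two-sided, and the quotient $\mathcal{R}[x]/\langle f_i\rangle$ cannot inherit a ring structure.

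Finally, the right $\mathcal{R}$-module structure comes for free: because $\langle f_i\rangle=f_i\mathcal{R}[x]$ is a right ideal, the rule $(p(x)+\langle f_i\rangle)\cdot r = p(x)r+\langle f_i\rangle$ for $r\in\mathcal{R}$ is well-defined on representatives and the module axioms are immediate from the associativity and distributivity in $\mathcal{R}[x]$. The main obstacle I expect is the conceptual one of pinning down that $\langle f_i\rangle$ must be read as a \emph{one-sided} ideal here (otherwise the statement is self-contradictory) and producing an explicit non-central coefficient of $f_i$; once that is done, the degree/leading-coefficient comparison finishes the argument routinely.
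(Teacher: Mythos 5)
Your proposal is more substantive than the paper's own proof, which consists of the single unsupported assertion that $\langle f_i \rangle$ is not a two-sided ideal of $\mathcal{R}[x]$ and then reads off both conclusions. Your degree and leading-coefficient computation actually proves that assertion whenever $f_i$ has at least one coefficient outside the center of $\mathcal{R}$: for such a coefficient $c$ pick $a$ with $ac \neq ca$, conclude $af_i \neq f_i a$, and then the monicity of $f_i$ forces any factorization $af_i = f_i g(x)$ to have $g = a$, a contradiction. The right-module half of your argument is also correct. So, on the cases it covers, your proof fills a hole the paper leaves completely open.

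However, the caveat you mention in passing is not a harmless edge case --- it is a genuine gap, and in fact the lemma as stated is false there. The center of $\mathcal{R} = M_2(\mathbb{Z}_4)$ is the set of scalar matrices, i.e.\ the copy of $\mathbb{Z}_4$, so any basic irreducible $f_i$ whose coefficients all lie in $\mathbb{Z}_4$ is central in $\mathcal{R}[x]$; then $f_i\mathcal{R}[x] = \mathcal{R}[x]f_i$ is two-sided and the quotient \emph{is} a ring. This situation is unavoidable, not merely non-generic: $x-1$ divides $x^n-1$ for every $n$, and $\mathcal{R}[x]/\langle x-1\rangle \cong \mathcal{R}$ is certainly a ring; likewise $x^3+x+1$ and $x^3+x^2+1$, the very factors used in the paper's length-$7$ example, have all coefficients in $\mathbb{F}_2$ and lift to central polynomials in $\mathcal{R}[x]$. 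Neither your argument nor the paper's assertion applies to these $f_i$, and for them the conclusion fails. To make the lemma true you must add the hypothesis that $\mu(f_i)$ has at least one coefficient in $\mathbb{F}_4\setminus\mathbb{F}_2$ (equivalently, that the lift has a non-central coefficient); under that hypothesis your argument is complete.
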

\begin{proof}
Since $\langle f_i \rangle$ is not two sided ideal of $\mathcal{R}[x]$ so $\dfrac{\mathcal{R}[x]}{\langle f_i \rangle }$ is not a ring for $1 \leq i \leq m$. Then each $\dfrac{\mathcal{R}[x]}{\langle f_i \rangle }$ is a right $\mathcal{R}$-module.
\end{proof}

We need a non-commutative analogue of the Chinese Remainder Theorem for modules.
 
\begin{lemma} 
Let $n$ be an odd integer. Then $$\dfrac{\mathcal{R}[x]}{\langle x^n-1 \rangle} = \bigoplus^m_1 \dfrac{\mathcal{R}[x]}{\langle f_i \rangle}.$$
\end{lemma}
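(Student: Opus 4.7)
The plan is to adapt the Chinese Remainder Theorem to this non-commutative setting, with the key trick being to choose lifts of the $f_i$ inside the commutative subring $\mathbb{Z}_4[w][x]\subset\mathcal{R}[x]$ so that the $f_i$ commute pairwise even though $\mathcal{R}$ itself does not.

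First I would arrange good lifts. Since $n$ is odd, $x^n-1$ is separable over $\mathbb{F}_4$, so the factors $\mu(f_i)$ are pairwise coprime in $\mathbb{F}_4[x]$. Applying Hensel's lemma in the commutative chain ring $\mathbb{Z}_4[w][x]$, I may take each $f_i$ to have coefficients in $\mathbb{Z}_4[w]$, whence the lifts commute pairwise and satisfy $x^n-1=f_1\cdots f_m$. The commutative CRT inside $\mathbb{Z}_4[w][x]$ then supplies Bezout relations $a_if_i+b_i\hat{f}_i=1$ with $a_i,b_i\in\mathbb{Z}_4[w][x]$, where $\hat{f}_i:=\prod_{j\neq i}f_j$.

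Next I would define the natural right $\mathcal{R}[x]$-module homomorphism
$$\phi:\dfrac{\mathcal{R}[x]}{\langle x^n-1\rangle}\longrightarrow\bigoplus_{i=1}^m\dfrac{\mathcal{R}[x]}{\langle f_i\rangle},\qquad g+\langle x^n-1\rangle\longmapsto(g+\langle f_i\rangle)_i.$$
Writing $\langle f_i\rangle=f_i\mathcal{R}[x]$ for the right ideal from the preceding lemma, $\phi$ is well-defined because $x^n-1=f_i\hat{f}_i\in f_i\mathcal{R}[x]$. For surjectivity, put $e_j:=b_j\hat{f}_j$. Since $a_j,b_j\in\mathbb{Z}_4[w][x]$ commute with every $f_i$, we have $e_j-1=-a_jf_j=-f_ja_j\in f_j\mathcal{R}[x]$, and for $i\neq j$ the divisibility $f_i\mid\hat{f}_j$ in $\mathbb{Z}_4[w][x]$ together with $b_jf_i=f_ib_j$ forces $e_j\in f_i\mathcal{R}[x]$. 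Thus $\phi(e_j)$ is the $j$-th standard basis vector, and right multiplication by arbitrary elements of $\mathcal{R}[x]$ then sweeps out every element of the $j$-th summand.

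For injectivity I would prove $\bigcap_if_i\mathcal{R}[x]=(x^n-1)\mathcal{R}[x]$ by induction on $m$. In the two-factor case, an element $g=fh=\tilde{f}s$ with $f,\tilde{f}$ commuting and coprime (Bezout $bf+a\tilde{f}=1$) satisfies $bfh=(1-a\tilde{f})h$, and commuting $f$ past $b$ and $\tilde{f}$ past $a$ gives $h=\tilde{f}(ah+bs)$, hence $g=f\tilde{f}(ah+bs)\in\langle x^n-1\rangle$. The general case regroups $f_1$ against $\hat{f}_1$ and iterates. The main obstacle throughout is keeping track of sides in $\mathcal{R}[x]$, since $\langle f_i\rangle$ is merely a right ideal: the proof goes through only because every time I need to rearrange a product, the element being moved sits inside the commutative subring $\mathbb{Z}_4[w][x]$, while the full non-commutative ring $\mathcal{R}$ acts only on the right of the whole expression.
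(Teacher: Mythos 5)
Your argument is correct, but it is genuinely different from what the paper does: the paper's ``proof'' of this lemma is only a pointer to \cite{Oggier2012} and \cite{Wisbauer1991}, whereas you supply a self-contained verification. The crux of your approach --- taking the Hensel lifts of the $\mu(f_i)$ inside the commutative subring $\mathbb{Z}_4[w][x]\subset\mathcal{R}[x]$ --- is exactly the right precaution, and it is worth emphasizing why: $\mathbb{Z}_4[w]$ is \emph{not} central in $M_2(\mathbb{Z}_4)$ (the center is only $\mathbb{Z}_4 I$), so the $f_i$ do not commute with arbitrary elements of $\mathcal{R}[x]$. Your proof survives this because every rearrangement you perform ($e_j-1=-a_jf_j=-f_ja_j$, $b_j\hat f_j=f_i b_j c$, $b\tilde f s=\tilde f b s$) only ever swaps two elements that both lie in $\mathbb{Z}_4[w][x]$, with the genuinely non-commutative factors $h,s,g\in\mathcal{R}[x]$ kept on the right throughout; this is precisely the discipline needed to make the classical CRT idempotent argument work for right ideals. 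Two small remarks: $\mathbb{Z}_4[w][x]$ is not itself a chain ring --- you mean that $\mathbb{Z}_4[w]=\mathrm{GR}(4,2)$ is a chain ring and Hensel's lemma applies to its polynomial ring, using that $x^n-1$ is separable mod $2$ for odd $n$; and your induction for injectivity, while correct, could be replaced by a one-line counting argument, since surjectivity of $\phi$ together with $\lvert\mathcal{R}[x]/\langle x^n-1\rangle\rvert=256^{\,n}=\prod_i 256^{\,\deg f_i}=\prod_i\lvert\mathcal{R}[x]/\langle f_i\rangle\rvert$ already forces bijectivity. What your route buys over the paper's is transparency: it makes visible the one structural fact (existence of a commutative coefficient subring containing a full Hensel lift of the factorization) on which the whole decomposition, and hence Theorems 3.2--3.4 downstream, actually rests.
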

\begin{proof}
The proof follows from \cite{Oggier2012}, \cite{Wisbauer1991}.
\end{proof}
  
%%%%%%%%%%%%%%%%%%%%%%%%%%%%%%%%%%%%%%%%%%%%%%%%%%%%%%%%%%%%%%%%%%%%%%%%%%%%%%%%%%%%%%%%%%%%%%%%%%%%%%%%%%%
\begin{thm}\label{p1_thm3}
If $f$ be an irreducible polynomial in $\mathbb{F}_4[x]$, then the right $\mathcal{R}$-modules of $\dfrac{\mathcal{R}[x]}{\langle f \rangle}$ are $$\langle 0 \rangle, \langle 1+\langle f \rangle \rangle, \langle U+\langle f \rangle \rangle, \langle 2U+\langle f \rangle \rangle, \langle (2+Um_{f}) + \langle f \rangle \rangle, \langle 2+\langle f \rangle \rangle, \langle \langle 2, U \rangle+\langle f \rangle \rangle,$$  where $m_{f}$ is an unit in $\dfrac{\mathbb{F}_4[x]}{\langle f \rangle}$.
\end{thm}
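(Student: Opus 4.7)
The plan is to classify all right $\mathcal{R}$-submodules of $\mathcal{S}=\mathcal{R}[x]/\langle f\rangle$ by recognizing $\mathcal{S}$ as a local "ring" built on top of the Galois ring $\mathbb{T}=\mathbb{Z}_{4}[w][x]/\langle f\rangle$. First I would use the decomposition $\mathcal{R}=\mathbb{Z}_{4}[w]+U\mathbb{Z}_{4}[w]$ from Theorem~\ref{p1_thm1} together with the basic irreducibility of $f$ to write every element of $\mathcal{S}$ uniquely as $a+Uc$ with $a,c\in\mathbb{T}$. Since $\mathbb{T}$ is the Galois ring $\mathrm{GR}(4,2\deg f)$---a finite chain ring with maximal ideal $2\mathbb{T}$ and residue field $\mathbb{A}=\mathbb{F}_{4}[x]/\langle\bar f\rangle$---the relations $U^{2}=2U$, $U^{3}=0$, $U\cdot 2=2U$, and $2U^{2}=0$ make $\mathcal{S}$ local with maximal submodule $\mathfrak{m}=(2,U)=2\mathbb{T}+U\mathbb{T}$, residue $\mathcal{S}/\mathfrak{m}\cong\mathbb{A}$, and filtration $\mathfrak{m}^{2}=(2U)=2U\mathbb{T}$, $\mathfrak{m}^{3}=0$.

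Next, I would classify proper submodules $M\subseteq\mathfrak{m}$ via the image $\bar M\subseteq\mathfrak{m}/\mathfrak{m}^{2}\cong\mathbb{A}\bar 2\oplus\mathbb{A}\bar U$. The key step is showing that $M\nsubseteq\mathfrak{m}^{2}$ forces $M\supseteq\mathfrak{m}^{2}$: any $m=2a+Uc\in M\setminus\mathfrak{m}^{2}$ has at least one of $a,c$ a unit in $\mathbb{T}$, and right-multiplying by $2$ or $U$ produces a generator of $\mathfrak{m}^{2}=2U\mathbb{T}$. The submodules contained in $\mathfrak{m}^{2}$ are then just $\{0\}$ and $\mathfrak{m}^{2}=\langle 2U\rangle$ because $\mathfrak{m}^{2}$ is a one-dimensional $\mathbb{A}$-vector space. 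The submodules with $\mathfrak{m}^{2}\subsetneq M\subseteq\mathfrak{m}$ correspond to the nonzero $\mathbb{A}$-subspaces of $\mathbb{A}^{2}$: the two coordinate axes lift to $\langle 2\rangle$ and $\langle U\rangle$; each one-dimensional slanted line $\mathbb{A}(\bar 2+m_{f}\bar U)$ with $m_{f}\in\mathbb{A}^{*}$ lifts to $\langle 2+Um_{f}\rangle$; and the full $\mathbb{A}^{2}$ lifts to $\mathfrak{m}=\langle 2,U\rangle$. Appending $\mathcal{S}=\langle 1\rangle$ (the only submodule not contained in $\mathfrak{m}$, since any element outside $\mathfrak{m}$ has unit "$a$-part" and is therefore a unit of $\mathcal{S}$ by the nilpotency of $U$) yields the seven listed types.

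The main obstacle will be step one: carefully organizing the decomposition so that the local analysis survives the non-commutativity of $\mathcal{R}$. I would need to verify that every right $\mathcal{R}$-submodule of $\mathcal{S}$ admits a normal form generated by elements of $\mathbb{T}+U\mathbb{T}$ that is compatible with the filtration by $\mathfrak{m}^{k}$, and in particular that right-multiplication by $U$ on $a+Uc$ collapses to the "commutative" formula $(a+Uc)U=U(a+2c)$ modulo the given relations. A secondary subtlety in step four is confirming that distinct units $m_{f},m_{f}'\in\mathbb{A}^{*}$ with $\bar m_{f}\ne\bar m_{f}'$ produce distinct submodules $\langle 2+Um_{f}\rangle\ne\langle 2+Um_{f}'\rangle$; this I would check by comparing images in $\mathfrak{m}/\mathfrak{m}^{2}$, where an equality would force $\bar m_{f}=\bar m_{f}'$ after normalizing the unit transporting one generator to the other.
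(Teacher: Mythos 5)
There is a genuine gap, and it sits exactly where you flagged the ``main obstacle'': the commutation of $\mathbb{T}$ past $U$ fails, and with it the local filtration on which your whole classification rests. With the paper's identifications $w=\left(\begin{smallmatrix}0&1\\3&3\end{smallmatrix}\right)$ and $U=\left(\begin{smallmatrix}1&1\\1&1\end{smallmatrix}\right)$ one computes $wU=\left(\begin{smallmatrix}1&1\\2&2\end{smallmatrix}\right)$, which is not of the form $Uc$ (every element of $U\mathcal{R}$ has equal rows), so no identity of the shape $(a+Uc)U=U(\cdots)$ can hold. This is not an artifact of the presentation: if some copy of $\mathbb{Z}_4[w]$ admitted a $U$ with $aU=U\sigma(a)$, then $U\mathbb{Z}_4[w]$ would be a two\-/sided ideal of $M_2(\mathbb{Z}_4)$ of order $16$, hence equal to $2M_2(\mathbb{Z}_4)$, forcing $U\in 2M_2(\mathbb{Z}_4)$ and contradicting $\mathbb{Z}_4[w]+U\mathbb{Z}_4[w]=M_2(\mathbb{Z}_4)$. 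The concrete casualty is your filtration: $Uw\cdot U=\left(\begin{smallmatrix}3&0\\3&0\end{smallmatrix}\right)\left(\begin{smallmatrix}1&1\\1&1\end{smallmatrix}\right)=3U$, a unit multiple of $U$, and $2w\cdot U=2+2Uw^{2}$, so $\mathfrak{m}\cdot\mathfrak{m}$ contains both $U$ and $2$; hence $\mathfrak{m}^{2}=\mathfrak{m}$ rather than $2U\mathbb{T}$, there is no nilpotent filtration, $\mathcal{S}$ is not local (note $\mathcal{R}/J(\mathcal{R})=M_2(\mathbb{F}_2)$ is simple but not a division ring, so $\mathcal{S}$ has several maximal right submodules), and the quotients you want to treat as $\mathbb{A}$-vector spaces are not $\mathbb{A}$-vector spaces. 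The damage propagates to the key step: $(2+Uw)U=2U+UwU=2U+3U=U$, so the right submodule $\langle 2+Uw\rangle$ contains $U$ and hence $2$, i.e.\ your ``slanted line'' collapses onto $\langle 2,U\rangle$ instead of sitting strictly inside it. (This also shows the statement itself needs more care than the paper gives it; the paper's proof evades the issue only by never right-multiplying by anything outside $\mathbb{F}_4[x]$, which does not determine a right $\mathcal{R}[x]$-submodule.)

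By contrast, the paper's argument is an element-by-element case analysis on $\gcd(g_i,f)$ for the coordinates of a generator $g=g_1+Ug_2+2g_3+2Ug_4$, which at least stays inside the commutative subring $\mathbb{F}_4[x]$ when it multiplies; your argument cannot be repaired by ``carefully organizing the decomposition,'' because the structural premise (a two-sided nilpotent $(2,U)$ with semisimple commutative graded pieces) is false in $M_2(\mathbb{Z}_4)$. If you want a structural proof in the spirit you intend, the workable replacement is Morita theory: right submodules of $M_2(\mathbb{Z}_4[x])/fM_2(\mathbb{Z}_4[x])$ correspond to $\mathbb{Z}_4[x]$-submodules of $\mathbb{Z}_4[x]^{2}/f\mathbb{Z}_4[x]^{2}\cong\mathbb{Z}_4[w][x]/\langle f\rangle$, i.e.\ to $\mathbb{Z}_4[\xi]$-submodules of the Galois ring $GR(4,2\deg f)$ with $\xi$ a root of $f$. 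That analysis is genuinely commutative and can be completed, but it is a different argument from the one you propose, and it reveals that the answer depends on whether $\mathbb{Z}_4[\xi]$ is all of $GR(4,2\deg f)$ or only its index-two Galois subring.
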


\begin{proof}
Let $I$ be a non-zero right sub-module of $\dfrac{\mathcal{R}[x]}{\langle f \rangle}$ and $g(x) \in \mathcal{R}[x]$ such that $g(x) + \langle f \rangle \in I$ but $g(x)\notin \langle f \rangle$. If $gcd(\mu(g(x))$, $f(x))=1$ then $g$ is invertible $\pmod f$. So $I=\langle 1 + \langle f \rangle \rangle =\dfrac{\mathcal{R}[x]}{\langle f \rangle}$. If $gcd(\mu(g(x)), f(x)) = f(x)$ then there exit $g_1(x)$, $g_2(x)$, $g_3(x)$, $g_4(x)\in\mathbb{F}_4[x]$ such that $g(x) = g_1(x) + Ug_2(x) + 2g_3(x) + 2Ug_4(x)$ with $gcd((g_1(x)), f(x))=f(x)$ then $g(x) +\langle f \rangle = Ug_2(x)+2g_3(x)+2Ug_4(x)+\langle f \rangle$. If $gcd(g_2(x), f(x))=f(x)$ then  $g(x)+\langle f \rangle =2g_3(x)+2Ug_4(x)+\langle f \rangle$. It follows that $I=\langle 2+\langle f \rangle \rangle$. If $gcd(g_3(x), f(x))=f(x)$, implies that $I=\langle 2U+\langle f \rangle \rangle$. Also if $gcd(g_2(x), f(x))=1$, then there exits $g^{-1}_2(x)\in \mathbb{F}_4[x]$ such that $g_2(x)g^{-1}_2(x)\equiv 1 \pmod f$. Therefore $2U=2g(x)g^{-1}_2(x)$. Hence $2U+\langle f \rangle =2g(x)g^{-1}_2(x)+\langle f \rangle\in I$. It follows that $ Ug_2(x)+2g_3(x)+\langle f \rangle = g(x) + 2Ug_4(x)+\langle f \rangle\in I$. If $gcd(g_3(x), f(x))=f(x), $ then $I=\langle U+\langle f \rangle \rangle$. Otherwise $gcd(g_3(x), f(x))=1$, then $g^{-1}_3(x)\in \mathbb{F}_4[x]$ such that $g_3(x)g^{-1}_3(x)\equiv 1 \pmod f$. Hence $2+Ug_2(x)g^{-1}_3(x)+\langle f \rangle \in I$, i.e., $\langle 2+Um_{f}+\langle f \rangle \rangle  =I$, where $m_{f}=g_2(x)g_3^{-1}(x)$ is an unit in $\dfrac{\mathbb{F}_4[x]}{\langle f \rangle}$. Since $gcd(g_2(x), f(x))=1,~ gcd(g_3(x), f(x))=1$ then there exit $a_1(x)$, $a_2(x)$, $b_1(x)$, $b_2(x)\in\mathbb{F}_4[x]$ such that $g_2(x)a_1(x)+f(x)a_2(x)=1$, $g_3(x)b_1(x)+f(x)b_2(x)=1$. Therefore $$Ub_1(x)+\langle f \rangle = (Ug_2(x) + \langle f \rangle)(a_1(x)b_1(x)+\langle f \rangle)$$
$$2a_1(x)+\langle f \rangle=(Ug_3(x)+\langle f \rangle)(a_1(x)b_1(x)+\langle f \rangle)$$     $$Ub_1(x)+2a_1(x)+\langle f \rangle=(Ug_2(x)+2g_3(x)\langle f \rangle)(a_1(x)b_1(x)+\langle f \rangle)$$ It follows that $I=\langle \langle U, 2 \rangle + \langle  f \rangle \rangle$.
\end{proof}

%%%%%%%%%%%%%%%%%%%%%%%%%%%%%%%%%%%%%%%%%%%%%%%%%%%%%%%%%%%%%%%%%%%%%%%%%%%%%%%%%%%%%%%%%%%%%%%%%%%%%%%%%%%
\begin{thm}\label{p1_thm4}
Let $x^n-1=f_1f_2f_3 \cdots f_m$, where $f_i$'s are monic basic irreducible pairwise coprime polynomials in $\mathcal{R}[x]$. Let $\hat{f}_i=\dfrac{x^n-1}{f_i}$. Then any ideal in $\dfrac{\mathcal{R}[X]}{\left\langle x^n-1\right\rangle}$ is the sum of the right sub-modules: $\left\langle \hat{f}_i+\left\langle x^n-1\right\rangle \right\rangle$, $\left\langle 2\hat{f}_i+\left\langle x^n-1\right\rangle \right\rangle$, $\left\langle U\hat{f}_i+\left\langle x^n-1\right\rangle \right\rangle$, $\left\langle 2U\hat{f}_i+\left\langle x^n-1\right\rangle \right\rangle$, $\left\langle (2+Um_f)\hat{f}_i+\left\langle x^n-1\right\rangle \right\rangle$, $\left\langle \left\langle2, U \right\rangle \hat{f}_i+\left\langle x^n-1\right\rangle \right\rangle $, where $m_{f}$ is an unit in $\dfrac{\mathbb{F}_4[x]}{\left\langle f \right\rangle}$. 
\end{thm}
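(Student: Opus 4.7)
The plan is to reduce the classification of ideals in $\mathcal{R}[x]/\langle x^n-1\rangle$ to the component-wise classification in Theorem~\ref{p1_thm3}, using the Chinese Remainder decomposition
$$\dfrac{\mathcal{R}[x]}{\langle x^n-1\rangle} \;=\; \bigoplus_{i=1}^{m}\dfrac{\mathcal{R}[x]}{\langle f_i\rangle}$$
from the preceding lemma, and then to transfer the component-wise description back to the original ring using multiplication by $\hat f_i = (x^n-1)/f_i$.

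The first step is to take an arbitrary ideal $I$ of $\mathcal{R}[x]/\langle x^n-1\rangle$ and, viewing the above CRT isomorphism as one of right $\mathcal{R}$-modules, decompose $I$ as $\bigoplus_{i=1}^{m} I_i$ with each $I_i$ a right sub-module of $\mathcal{R}[x]/\langle f_i\rangle$. Theorem~\ref{p1_thm3} then forces each $I_i$ to be one of exactly seven possibilities: $\langle 0\rangle$, or one of the six right sub-modules generated by $1$, $2$, $U$, $2U$, $2+Um_{f_i}$, or the pair $\langle 2,U\rangle$, all taken modulo $\langle f_i\rangle$.

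The second step is to realize each such $I_i$ as the image under the CRT isomorphism of a sub-module of $\mathcal{R}[x]/\langle x^n-1\rangle$ generated by $\hat f_i$ times the corresponding element. Pairwise coprimality of the $f_j$'s gives $\hat f_i\in\langle f_j\rangle$ for $j\ne i$, so $s\hat f_i + \langle x^n-1\rangle$ has zero image in every component except the $i$-th, where $\hat f_i+\langle f_i\rangle$ is a unit and hence $s\hat f_i$ generates the same right sub-module as $s$ does. This produces the correspondences $\langle\hat f_i\rangle \leftrightarrow \langle 1+\langle f_i\rangle\rangle$, $\langle 2\hat f_i\rangle \leftrightarrow \langle 2+\langle f_i\rangle\rangle$, and analogously for the other five families. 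Summing over $i$ the appropriate choice from each family reconstructs $I$, which is the claim.

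The main obstacle is the coordinate-wise splitting of an arbitrary right sub-module used in the first step: in a general direct sum such a splitting fails, but here it is supplied by the non-commutative Chinese Remainder Theorem invoked from \cite{Oggier2012,Wisbauer1991}, which delivers the orthogonal idempotents one can use to project onto each summand. Given that input, the remainder of the argument is essentially bookkeeping, apart from the minor notational point that the unit $m_f$ of Theorem~\ref{p1_thm3} depends on the component $f_i$, a dependence that the statement of Theorem~\ref{p1_thm4} absorbs into the generic symbol $m_f$.
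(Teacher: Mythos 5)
Your proposal is correct and follows exactly the route the paper takes: the paper's own proof is a one-line appeal to the Chinese Remainder Theorem for modules together with the classification of right sub-modules of $\mathcal{R}[x]/\langle f\rangle$ from Theorem \ref{p1_thm3}, which is precisely the decomposition-and-transfer argument you spell out. Your version merely supplies the details (the component-wise splitting via the CRT idempotents and the role of $\hat f_i$ as a unit in the $i$-th component) that the paper leaves implicit.
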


\begin{proof}
It follows from the Chinese Remainder Theorem for modules and the right $\mathcal{R}$-modules of the $\dfrac{\mathcal{R}[x]}{\left\langle f\right\rangle}$.
\end{proof}

%%%%%%%%%%%%%%%%%%%%%%%%%%%%%%%%%%%%%%%%%%%%%%%%%%%%%%%%%%%%%%%%%%%%%%%%%%%%%%%%%%%%%%%%%%%%%%%%%%%%%%%%%%%
\begin{thm}\label{p1_thm5}
Let $C$ be a cyclic code of length $n$ over $\mathcal{R}$. Then there exists a family of pairwise monic polynomials $F_0, F_1, \dots, F_6\in \mathbb{F}_4[x]$ such that  $F_0F_1\cdots F_6=x^n-1$ and $C=\left\langle \hat{F}_1  \right\rangle$ $\oplus$ $\left\langle U\hat{F}_2  \right\rangle $ $\oplus $  $ \left\langle 2\hat{F}_3  \right\rangle $ $\oplus $ $\left\langle 2U\hat{F}_4  \right\rangle $ $\oplus $  $\left\langle (2+Um_f)\hat{F}_5  \right\rangle $ $\oplus $  $ \left\langle \left\langle 2, U\right\rangle  \hat{F}_6  \right\rangle$, where $m_{f}$ is an unit in $\dfrac{\mathbb{F}_4[x]}{\left\langle f \right\rangle}$. Moreover $\vert C \vert = 4^\alpha$, where $\alpha = 4 degF_1+2 degF_2+2 degF_3+degF_4+2 degF_5+3 degF_6$.
\end{thm}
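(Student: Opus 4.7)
The plan is to refine Theorem~\ref{p1_thm4} by grouping the CRT factors according to which of the seven submodule types from Theorem~\ref{p1_thm3} occurs in each coordinate, and then to count the resulting orders.

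First, since $n$ is odd I lift $x^n-1 = f_1\cdots f_m$ from $\mathbb{F}_4[x]$ to a product of pairwise coprime monic basic irreducible polynomials in $\mathcal{R}[x]$ and apply the Chinese Remainder Theorem for modules to write
\[
\mathcal{R}[x]/\langle x^n-1\rangle \;=\; \bigoplus_{i=1}^m \mathcal{R}[x]/\langle f_i\rangle.
\]
A cyclic code $C$ is an ideal on the left, so decomposes as $C=\bigoplus_{i=1}^m C_i$ with each $C_i$ a right $\mathcal{R}$-submodule of $\mathcal{R}[x]/\langle f_i\rangle$. Theorem~\ref{p1_thm3} forces every such $C_i$ to be one of the seven types $\langle 0\rangle$, $\langle 1\rangle$, $\langle U\rangle$, $\langle 2\rangle$, $\langle 2U\rangle$, $\langle 2+Um_{f_i}\rangle$, or $\langle\langle 2,U\rangle\rangle$.

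Next I partition $\{f_1,\ldots,f_m\}$ into seven disjoint groups according to which type $C_i$ belongs, and set $F_0,F_1,\ldots,F_6\in\mathbb{F}_4[x]$ equal to the product of the $f_i$'s in each group, with $F_0$ collecting the trivial components. Then $F_0F_1\cdots F_6=x^n-1$ and the $F_j$'s are pairwise coprime. To re-express $C$ using the $\hat{F}_j$'s, I invoke the standard CRT reproducing identity: $\hat{F}_j\bmod f_i$ is a unit when $f_i\mid F_j$ and zero otherwise, so $\langle\hat{F}_j\rangle$ inside the CRT decomposition is exactly $\bigoplus_{f_i\mid F_j}\mathcal{R}[x]/\langle f_i\rangle$. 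Pre-multiplying $\hat{F}_j$ by the appropriate type-generator ($1,U,2,2U,2+Um_f$, or the pair $\{2,U\}$) then carves out, inside each $\mathcal{R}[x]/\langle f_i\rangle$ with $f_i\mid F_j$, precisely the right submodule of the chosen type. Summing these six generator-submodules recovers $C$ in the form stated.

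For the cardinality, I use that $\mathcal{R}[x]/\langle f\rangle$ is free of rank $d=\deg f$ over $\mathcal{R}$ and so has $|\mathcal{R}|^d=4^{4d}$ elements. Writing a generic element as $a+2b+Uc+2Ud$ with $a,b,c,d$ in the residue field $\mathbb{F}_4[x]/\langle f\rangle$ of size $4^d$, and exploiting $U^2=2U$, $U^3=0$, $2U^2=0$, a direct computation shows the orders of the six non-trivial submodule types $\langle 1\rangle,\langle U\rangle,\langle 2\rangle,\langle 2U\rangle,\langle 2+Um_f\rangle,\langle\langle 2,U\rangle\rangle$ to be $4^{4d},4^{2d},4^{2d},4^{d},4^{2d},4^{3d}$ respectively. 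Multiplying these over all irreducible factors and grouping by type yields $|C|=4^{\alpha}$ with $\alpha=4\deg F_1+2\deg F_2+2\deg F_3+\deg F_4+2\deg F_5+3\deg F_6$, as claimed.

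The main subtle point is verifying the order of $\langle 2+Um_f\rangle$: although the generator carries both a ``$2$'' and a ``$U$'' summand, its right submodule has only $4^{2d}$ elements and not $4^{3d}$, because $U^2=2U$ forces the image of $Um_f$ under right multiplication by elements of $\mathcal{R}$ to collapse into $2U\cdot\mathcal{R}$, which is already reachable from the $2$-part. Pinning this collapse down rigorously, together with the bookkeeping of the non-commutative action of $U$ on $\mathbb{Z}_4[w]$, is the only genuine difficulty; the remainder of the argument is routine CRT manipulation.
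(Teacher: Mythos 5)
Your proposal is correct and follows essentially the same route as the paper: the paper's proof defers the decomposition to Theorem 2 of Luo--Parampalli and simply lists the orders of the six component types, which is precisely the CRT-plus-submodule-classification argument (via Theorems \ref{p1_thm3} and \ref{p1_thm4}) and the order count that you carry out explicitly. If anything, your version supplies the details the paper omits, in particular the verification that $\lvert (2+Um_f)\,\mathcal{R}[x]/\langle f\rangle\rvert = 4^{2\deg f}$ via the collapse $U^2=2U$.
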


\begin{proof}
First part follows from a similar argument of Theorem 2 in \cite{Luo2017}. Now we compute $|C|$. We know that $C=\left\langle \hat{F}_1  \right\rangle $ $\oplus $   $\left\langle U\hat{F}_2  \right\rangle $ $\oplus $  $ \left\langle 2\hat{F}_3  \right\rangle $ $\oplus $ $\left\langle 2U\hat{F}_4  \right\rangle $ $\oplus $  $\left\langle (2+Um_f)\hat{F}_5  \right\rangle $ $\oplus $  $ \left\langle \left\langle 2,U\right\rangle  \hat{F}_6  \right\rangle$, which implies that $\vert C \vert  =  \mid\hat{F}_{1}\mid \cdot \mid U\hat{F}_{2}\mid  \cdot \mid 2\hat{F}_{3}\mid  \cdot \mid 2U\hat{F}_{4}\mid  \cdot \mid (2+Um_f)\hat{F}_{5}\mid  \cdot \mid \langle2, U\rangle\hat{F}_{6}\mid$. The rest follows from the fact that 
%\begin{equation*}
%\begin{array}{ccc}
$\mid\hat{F}_{1}\mid=4^{4degF_1}$,  $|U\hat{F}_{2}|=4^{2degF_2}$, $ \mid 2\hat{F}_{3}\mid=4^{2degF_3}$, $\mid 2U\hat{F}_{4}\mid=4^{degF_4}$, $\mid (2+Um_f)\hat{F}_{5}\mid=4^{2degF_5}$,  $\mid \langle2, U\rangle\hat{F}_{6}\mid=4^{3degF_6}$.
%\end{array}
%\end{equation*}
%Therefore 
%\begin{equation*}
%\begin{array}{ccl}
%$\vert C \vert  =  \mid\hat{F}_{1}\mid \mid U\hat{F}_{2}\mid \mid 2\hat{F}_{3}\mid  \mid 2U\hat{F}_{4}\mid  \mid (2+Um_f)\hat{F}_{5}\mid \mid \langle2, U\rangle\hat{F}_{6}\mid$, which implies that $\vert C \vert  =  4^{4degF_1+2degF_2+2degF_3+degF_4+2degF_5+3degF_6}$. 
%\end{array}
%\end{equation*}
 \end{proof}
 
%%%%%%%%%%%%%%%%%%%%%%%%%%%%%%%%%%%%%%%%%%%%%%%%%%%%%%%%%%%%%%%%%%%%%%%%%%%%%%%%%%%%%%%%%%%%%%%%%%%%%%%%%%%
\begin{thm}\label{p1_thm6}
Let $C$ be a cyclic code of length $n$ over $\mathcal{R}$ with $C=\left\langle \hat{F}_1  \right\rangle \oplus \left\langle U\hat{F}_2  \right\rangle \oplus  \left\langle 2\hat{F}_3  \right\rangle \oplus \left\langle 2U\hat{F}_4  \right\rangle \oplus \left\langle (2+Um_f)\hat{F}_5  \right\rangle \oplus  \left\langle \left\langle 2, U\right\rangle  \hat{F}_6  \right\rangle$, where $m_{f}$ is an unit in $\dfrac{\mathbb{F}_4[x]}{\left\langle f \right\rangle}$ and $F=\hat{F}_1+U\hat{F}_2+2\hat{F}_3+2U\hat{F}_4+(2+Um_f)\hat{F}_5+\left\langle 2, U\right\rangle \hat{F}_6$. Then $C=\left\langle  F \right\rangle$.
\end{thm}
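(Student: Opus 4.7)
The plan is the usual two-sided inclusion. The direction $\langle F\rangle\subseteq C$ is immediate: every summand of the formal expression for $F$ lies in one of the six direct summands of $C$ listed in Theorem~\ref{p1_thm5}, so $F\in C$, and since $C$ is closed under right multiplication by $\mathcal{R}[x]/\langle x^n-1\rangle$, one has $\langle F\rangle\subseteq C$.

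For the reverse inclusion $C\subseteq\langle F\rangle$, I would construct an orthogonal family of idempotent-like elements $e_1,\dots,e_6$ that, via right multiplication, peel off one summand of $F$ at a time. Coprimality of $F_i$ and $\hat{F}_i$ in $\mathbb{F}_4[x]$ yields, by Bezout's identity, polynomials $a_i,b_i\in\mathbb{F}_4[x]$ with $a_iF_i+b_i\hat{F}_i=1$. Setting $e_i=b_i\hat{F}_i$ (lifted canonically to the commutative subring $W[x]\subset\mathcal{R}[x]$), the identity $\hat{F}_i=a_iF_i\hat{F}_i+b_i\hat{F}_i^2$ combined with $F_i\hat{F}_i=x^n-1\equiv 0$ gives $\hat{F}_ie_i\equiv\hat{F}_i\pmod{x^n-1}$. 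Moreover, for $j\ne i$ the product $\hat{F}_j\hat{F}_i$ equals $(x^n-1)\prod_{k\ne i,j}F_k$ and so $\hat{F}_je_i\equiv0$.

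With these idempotents in hand, right multiplication $F\cdot e_i$ isolates the $i$-th term of $F$. The key observation for handling the non-commutativity is that a generic term $\alpha_j\hat{F}_j$ of $F$ (with $\alpha_j\in\mathcal{R}$) satisfies $\alpha_j\hat{F}_j\cdot e_i=\alpha_j\,b_i\hat{F}_j\hat{F}_i$ after commuting the $W[x]$-factors past each other, and since $\hat{F}_j\hat{F}_i$ is a central multiple of $x^n-1$ this whole product is $0$ in the quotient whenever $j\ne i$. Carrying this out gives $Fe_1=\hat{F}_1$, $Fe_2=U\hat{F}_2$, $Fe_3=2\hat{F}_3$, $Fe_4=2U\hat{F}_4$, $Fe_5=(2+Um_f)\hat{F}_5$, and $Fe_6$ returns the sixth-summand contribution. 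Each of these is a generator of the corresponding direct summand of $C$ from Theorem~\ref{p1_thm5}, and summing over $i$ shows that every element of that decomposition is in $\langle F\rangle$, giving $C\subseteq\langle F\rangle$.

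The main obstacle I expect is the sixth summand: the sub-module $\langle 2,U\rangle\hat{F}_6$ is not cyclic on a single element of $\mathcal{R}$ but needs both generators $2\hat{F}_6$ and $U\hat{F}_6$. The formal symbol $\langle 2,U\rangle\hat{F}_6$ appearing inside $F$ must be interpreted so that both generators are placed in $\langle F\rangle$, and one must carefully invoke the structural relations $U^2=2U$, $2U=U2$, $2U^2=0$ to verify that the extraction via $e_6$, combined with left multiplication by $2$ and $U$ in $\mathcal{R}$, delivers both $2\hat{F}_6$ and $U\hat{F}_6$ separately. Once this last summand is pinned down, the six extractions together exhibit all the generators of the direct-sum decomposition of $C$ inside $\langle F\rangle$, completing the proof.
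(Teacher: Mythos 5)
Your proposal is correct and follows essentially the same route as the paper: both rest on the relations $F_i\hat{F}_i=0$ and $\hat{F}_i\hat{F}_j=0$ for $i\neq j$ together with the Bezout identity $a_iF_i+b_i\hat{F}_i=1$, and extract each summand of $F$ by right multiplication (the paper multiplies by $\prod_{j\neq i}a_jF_j$ obtained from expanding the product of the Bezout identities, while you multiply by the complementary factor $e_i=b_i\hat{F}_i$ — an equivalent, slightly cleaner packaging). Your explicit attention to the sixth summand $\left\langle 2,U\right\rangle\hat{F}_6$, which needs both generators $2\hat{F}_6$ and $U\hat{F}_6$ and is glossed over in the paper, is a point in your favour rather than a divergence.
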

\begin{proof}
For any two distinct $i, j$, $0 \leq i, j \leq 6$, we have $(x^n-1)\vert \hat{F}_i\hat{F}_j$. So $\hat{F}_i\hat{F}_j=0$. Also for any $i$ with $0\leq i \leq 6$, $ F_i ,\hat{F}_i $ are coprime and $ F_i \hat{F}_i=0$. Since  $ F_i, \hat{F}_i $ are coprime, there exist $a_i, b_i$ such that $(a_1F_1+b_1\hat{F}_1)(a_2F_2+b_2\hat{F}_2)(a_3F_3+b_3\hat{F}_3)(a_4F_4+b_4\hat{F}_4)(a_5F_5+b_5\hat{F}_5)=1$. This implies that $a_1F_1a_2F_2a_3F_3a_4F_4a_5F_5 + b_1\hat{F}_1a_2F_2a_3F_3a_4F_4a_5F_5 + a_1F_1b_2\hat{F}_2a_3F_3a_4F_4a_5F_5 + a_1F_1a_2F_2b_3\hat{F}_3a_4F_4a_5F_5 +a_1F_1a_2F_2a_3F_3b_4\hat{F}_4a_5F_5 +a_1F_1a_2F_2a_3F_3a_4F_4b_5\hat{F}_5 =1$. On multiplying both side by $\hat{F}_6$, we obtain $\hat{F}_6a_1F_1a_2F_2a_3F_3a_4F_4a_5F_5=\hat{F}_6$.  We have $F=\hat{F}_1+U\hat{F}_2+2\hat{F}_3+2U\hat{F}_4+(2+Um_f)\hat{F}_5+\left\langle 2, U\right\rangle \hat{F}_6$.  It follows that  $Fa_1F_1a_2F_2a_3F_3a_4F_4a_5F_5    =\left\langle 2, U\right\rangle \hat{F}_6a_1F_1a_2F_2a_3F_3a_4F_4a_5F_5,$  which inturn implies that $Fa_1F_1a_2F_2a_3F_3a_4F_4a_5F_5=\langle 2, U\rangle \hat{F}_6$. Hence $ \langle 2, U\rangle \hat{F}_6\in \langle F\rangle$. Continuing this process, we obtain $\hat{F}_1,~ U\hat{F}_2,~ 2\hat{F}_3,~ 2U\hat{F}_4, ~(2+Um_f)\hat{F}_5,~ \left\langle 2, U\right\rangle \hat{F}_6\in \langle F \rangle$.  Consequently $C=\langle F \rangle$.
\end{proof}
%%%%%%%%%%%%%%%%%%%%%%%%%%%%%%%%%%%%%%%%%%%%%%%%%%%%%%%%%%%%%%%%%%%%%%%%%%%%%%%%%%%%%%%%%%%%%%%%%%%%%%%%%%% 
Let us denote $R=\dfrac{\mathbb{F}_4[x]}{\left\langle x^n-1 \right\rangle}$.

\begin{thm}\label{p1_thm7}
Let $C$ be a cyclic code of length $n$ over $\mathcal{R}$. Then there exists a family of polynomials $F, G, H, Q, T\in \mathbb{F}_4[x] $ which are divisors of $x^n-1$ such that $C=\left\langle F \right\rangle_{R} \oplus U\left\langle G \right\rangle_{R} \oplus 2\left\langle H \right\rangle_{R}\oplus  2U\left\langle Q \right\rangle_{R}\oplus (2+Um_f) \left\langle T \right\rangle_{R}$, where $m_{f}$ is an unit in $\dfrac{\mathbb{F}_4[x]}{\left\langle f \right\rangle}$. Moreover $\vert C \vert=4^{5n-(degF+degG+degH+degQ+degT)}$. 
\end{thm}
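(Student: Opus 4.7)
The plan is to derive Theorem \ref{p1_thm7} from Theorem \ref{p1_thm5} by refactoring the six-term direct sum there into a five-term direct sum organized by the five coefficient elements $1, U, 2, 2U, 2+Um_f$ of $\mathcal R$. Starting from the decomposition
\[
C = \langle \hat F_1\rangle \oplus \langle U\hat F_2\rangle \oplus \langle 2\hat F_3\rangle \oplus \langle 2U\hat F_4\rangle \oplus \langle (2+Um_f)\hat F_5\rangle \oplus \langle \langle 2,U\rangle \hat F_6\rangle
\]
with $F_0F_1\cdots F_6=x^n-1$ pairwise coprime, I would first use the relations $U^2=2U$ and $2U^2=0$ to record, for each of the six non-trivial submodule types from Theorem \ref{p1_thm3}, which coefficient layers it occupies: Type~1 occupies $\{1,U,2,2U\}$; Types~2 and~3 occupy $\{U,2U\}$ and $\{2,2U\}$ respectively; Type~4 occupies $\{2U\}$ alone; Type~5 occupies the diagonal $\{2+Um_f\}$ together with $\{2U\}$; and Type~6 occupies $\{U,2,2U\}$.

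Collecting factors of $x^n-1$ by the layers they feed, I would set
\[F=\hat F_1,\qquad G=F_0F_3F_4F_5,\qquad H=F_0F_2F_4F_5,\qquad Q=F_0,\qquad T=\hat F_5,\]
so that (via the CRT decomposition of $R$) the codes $\langle F\rangle_R, \langle G\rangle_R, \langle H\rangle_R, \langle Q\rangle_R, \langle T\rangle_R$ are the cyclic $\mathbb F_4$-codes supported respectively on the $F_1$-, $F_1F_2F_6$-, $F_1F_3F_6$-, $F_1F_2F_3F_4F_5F_6$-, and $F_5$-factors of $x^n-1$. A routine check factor-by-factor confirms that these supports align with the layer occupancies above, so every element of $C$ can be uniquely expressed as $f+Ug+2h+2Uq+(2+Um_f)t$ with $f\in\langle F\rangle_R$, $g\in\langle G\rangle_R$, and so on. Since each $\langle X\rangle_R$ is a cyclic $\mathbb F_4$-code of length $n$ and hence of $\mathbb F_4$-cardinality $4^{n-\deg X}$, multiplying the five cardinalities immediately yields $|C|=4^{5n-(\deg F+\deg G+\deg H+\deg Q+\deg T)}$. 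Substituting the explicit choices above and using $n=\sum_{i=0}^6 \deg F_i$ recovers the exponent $4\deg F_1+2\deg F_2+2\deg F_3+\deg F_4+2\deg F_5+3\deg F_6$ of Theorem \ref{p1_thm5}, cross-checking consistency.

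The main obstacle is the directness of the sum. Because $2+Um_f = 2 + m_f U$, the five coefficient elements $1, U, 2, 2U, 2+Um_f$ are $\mathbb F_4$-linearly dependent in $\mathcal R$, so a priori a decomposition $c=f+Ug+2h+2Uq+(2+Um_f)t$ need not be unique. Uniqueness survives only because the possible collision $(2+Um_f)\leftrightarrow (2,U)$ can occur only on the $F_5$-positions, where $F_5\mid G$ and $F_5\mid H$ force $g$ and $h$ to vanish there, quarantining the Type~5 entanglement into $t$. Formalising this factor-by-factor support-disjointness, and verifying that each of the six original summands of Theorem \ref{p1_thm5} rewrites uniquely in the new form, is the delicate step that I expect to absorb the bulk of the proof.
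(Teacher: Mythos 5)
Your proposal is correct, and it is worth noting that the paper's entire proof of Theorem~\ref{p1_thm7} is the single sentence ``A similar argument as in \cite{Luo2017}'', so there is no in-text argument to compare against; what you supply is a genuine, self-contained derivation of Theorem~\ref{p1_thm7} from Theorem~\ref{p1_thm5}. Your strategy of sorting the six CRT-indexed summands into the five coefficient layers $1, U, 2, 2U, 2+Um_f$ works, and your explicit choices $F=\hat F_1$, $G=F_0F_3F_4F_5$, $H=F_0F_2F_4F_5$, $Q=F_0$, $T=\hat F_5$ are exactly right: each summand of Theorem~\ref{p1_thm5} decomposes into pieces of the five new summands and conversely, and substituting these degrees into $5n-(\deg F+\deg G+\deg H+\deg Q+\deg T)$ recovers the exponent $4\deg F_1+2\deg F_2+2\deg F_3+\deg F_4+2\deg F_5+3\deg F_6$, as you check. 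Your handling of the one genuine danger point --- that $(2+Um_f)t$ spills into the $U$- and $2$-layers and threatens directness --- by noting that $\langle G\rangle_{R}$ and $\langle H\rangle_{R}$ vanish on the $F_5$-components while $\langle T\rangle_{R}$ is supported only there, is the correct resolution. A full writeup should make explicit two small facts that both you and the paper take for granted: that multiplication by $U$, $2$, $2U$ and $2+Um_f$ is injective on the Teichm\"uller-lifted $\mathbb{F}_4$-codes (so that $|U\langle G\rangle_{R}|=4^{n-\deg G}$ and so on), and that, despite the non-commutativity of $\mathcal{R}$, the products $(2+Um_f)U$ and $(2+Um_f)2$ really do land in the $2U$-layer (which follows from $U^2=2U$, $2U=U2$ and $2U^2=0$), since that is what justifies your claim that the Type~5 submodule occupies only the layers $\{2+Um_f\}$ and $\{2U\}$. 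With those two checks written out, your argument is complete and arguably more informative than the citation the paper gives, because it exhibits the exact dictionary between the two normal forms.
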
 
\begin{proof}
A similar argument as in \cite{Luo2017}. 
\end{proof}

%%%%%%%%%%%%%%%%%%%%%%%%%%%%%%%%%%%%%%%%%%%%%%%%%%%%%%%%%%%%%%%%%%%%%%%%%%%%%%%%%%%%%%%%%%%%%%%%%%%%%%%%%%
\section{Self-dual cyclic codes over $M_2(\mathbb{Z}_4)$}

For given $\textbf{x}=(x_1, x_2, \dots, x_n)$, $\textbf{y}=(y_1, y_2, \dots, y_n)\in \mathcal{R}^n$, the Euclidean scalar product (or dot product) of $\textbf{x, y}$ is $\textbf{x}\cdot \textbf{y} = x_1y_1+x_2y_2+ \cdots+x_ny_n \pmod 4$.  Two vectors $\textbf{x}$ and $\textbf{y}$ in $\mathcal{R}^n$ are called orthogonal if $\textbf{x}\cdot \textbf{y}=0$. 
For a linear code $C $ over $\mathcal{R}$, its dual code $C^\perp$ is the set of words over $\mathcal{R}$ that are orthogonal to all codewords of $C$, i.e., $C^\perp=\left\lbrace \textbf{x} \in \mathcal{R}^n \mid \textbf{x} \cdot \textbf{y}=0, \forall y\in C \right\rbrace$.  A code $C$ is called self-orthogonal if $C \subset C^{\perp}$ and self-dual if $C=C^{\perp}$.

%\begin{prop}\label{p1_prop1}
%Let $C$ be a linear code of length $n$ over $\mathcal{R}$. The number of codewords in $C$ is $4^k$ for some integer $k\in\lbrace 0, 1, 2, \dots, 4n\rbrace$. Moreover the dual $C^\perp$ has $4^l$ codewords, where $k+l=4n$.
%\end{prop}

Let $f(x)=a_0+a_1x+ \cdots +a_{k-1}x^{k-1} + a_{k}x^{k}$ be a polynomial of degree $k$ with $a_{k}\neq 0$, $a_{0}\neq 0$. The reciprocal $f^\ast(x)$ of $f(x)$ is defined by $$f^\ast(x)=a_0^{-1}x^kf(x^{-1}).$$

%%%%%%%%%%%%%%%%%%%%%%%%%%%%%%%%%%%%%%%%%%%%%%%%%%%%%%%%%%%%%%%%%%%%%%%%%%%%%%%%%%%%%%%%%%%%%%%%%%%%%%%%%%
\begin{thm}\label{p1_thm8}
Let $C$ be a cyclic code of length $n$ over $\mathcal{R}$ with $C=\left\langle \hat{F}_1  \right\rangle $ $\oplus $   $\left\langle U\hat{F}_2  \right\rangle $ $\oplus $  $ \left\langle 2\hat{F}_3  \right\rangle $ $\oplus $ $\left\langle 2U\hat{F}_4  \right\rangle $ $\oplus $  $\left\langle (2+Um_f)\hat{F}_5  \right\rangle $ $\oplus $  $ \left\langle \left\langle 2, U\right\rangle  \hat{F}_6  \right\rangle$, where $m_{f}$ is an unit in $\dfrac{\mathbb{F}_4[x]}{\left\langle f \right\rangle}$. Then $C^\perp=\left\langle \hat{F}^\ast_0  \right\rangle $ $\oplus $   $\left\langle U\hat{F}^\ast_2  \right\rangle $ $\oplus $  $ \left\langle 2\hat{F}^\ast_3  \right\rangle $ $\oplus $ $\left\langle 2U\hat{F}^\ast_6  \right\rangle $ $\oplus $  $\left\langle (2+Um_f)\hat{F}^\ast_5  \right\rangle $ $\oplus $  $ \left\langle \left\langle 2, U\right\rangle  \hat{F}^\ast_4  \right\rangle$ and $\mid C^\perp\mid=4^{4degF_0+2degF_2+2degF_3+3degF_4+2degF_5+degF_6}$.
\end{thm}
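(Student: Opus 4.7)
The plan is to prove the two assertions of Theorem~\ref{p1_thm8} separately: a global size-counting argument for the cardinality $|C^\perp|$, and a local, factor-by-factor analysis for the explicit generating set. For the cardinality, I would invoke the Frobenius property of $\mathcal{R}=M_2(\mathbb{Z}_4)$ (a matrix ring over the Frobenius ring $\mathbb{Z}_4$), which yields the identity $|C|\cdot|C^\perp|=|\mathcal{R}|^{n}=4^{4n}$ for every linear code $C\subseteq\mathcal{R}^n$. Combining this with the formula for $|C|$ provided by Theorem~\ref{p1_thm5} and the relation $\sum_{i=0}^{6}\deg F_i=n$ gives $4n-\alpha=4\deg F_0+2\deg F_2+2\deg F_3+3\deg F_4+2\deg F_5+\deg F_6$, which is exactly the claimed exponent.

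For the structural statement, I would first note that $C^\perp$ is itself cyclic, since $\tau$ preserves the Euclidean scalar product; hence by Theorem~\ref{p1_thm5} it admits a decomposition of the same shape as $C$ with respect to some divisors of $x^n-1$, and the task is to identify those divisors as the $\hat F_i^\ast$ with the stated index shuffle. The key tool is the classical reciprocal identity: for $\mathbf{x},\mathbf{y}\in\mathcal{R}^n$ one has $\mathbf{x}\cdot\tau^k(\mathbf{y})=0$ for every $0\le k\le n-1$ if and only if $x(t)\,y^\ast(t)\equiv 0\pmod{t^n-1}$, where $y^\ast$ denotes the reciprocal of $y$. This characterises $C^\perp$ as the (right) annihilator of $C^\ast=\{y^\ast(t):y(t)\in C\}$ inside $\mathcal{R}[t]/\langle t^n-1\rangle$.

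Applying the non-commutative CRT decomposition along $x^n-1=F_0F_1\cdots F_6$, the annihilator computation splits into local ones inside each $\mathcal{R}[t]/\langle F_i\rangle$. Using the seven right-$\mathcal{R}$-module types listed in Theorem~\ref{p1_thm3}, together with the relations $U^2=2U$ and $U^3=2U^2=0$, one determines the annihilator pairings $\langle 1\rangle\leftrightarrow\langle 0\rangle$ and $\langle 2U\rangle\leftrightarrow\langle\langle 2,U\rangle\rangle$, while the three equal-size modules $\langle U\rangle$, $\langle 2\rangle$ and $\langle 2+Um_f\rangle$ turn out to be self-paired. Reassembling these local pairings via the CRT, together with the reciprocal correspondence $F_i\leftrightarrow F_i^\ast$, produces exactly the decomposition of $C^\perp$ claimed in the theorem: in particular the $\langle 2U\rangle$-component of $C$ sitting at $F_4$ is annihilated by a $\langle\langle 2,U\rangle\rangle$-component at $\hat F_4^\ast$, which explains the swap of the subscripts $4$ and $6$ in the statement of $C^\perp$.

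The main obstacle I anticipate is the verification of the local annihilator pairings inside $\mathcal{R}[t]/\langle F_i\rangle$. Because $\mathcal{R}$ is non-commutative and because $U^2=2U$ rather than $U^2=0$, the self-pairings of $\langle U\rangle$ and of $\langle 2+Um_f\rangle$ cannot be transferred verbatim from the $M_2(\mathbb{Z}_2+u\mathbb{Z}_2)$ setting of \cite{Luo2017}; each pairing must be checked by explicit manipulation of the seven generator shapes of Theorem~\ref{p1_thm3}, following the style of the proof of that theorem. Once these pairings are in hand, the cardinality calculation from the first step serves as a sanity check confirming that the six claimed summands in $C^\perp$ together account for the entire dual, so no further generators are missed.
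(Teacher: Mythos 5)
Your proposal is correct in outline and your cardinality computation is exactly the paper's: both invoke $|C|\cdot|C^\perp|=4^{4n}$ together with Theorem \ref{p1_thm5} and $\sum_{i=0}^{6}\deg F_i=n$ to get the stated exponent. Where you diverge is the structural half. The paper does \emph{not} compute $C^\perp$ as an exact local annihilator; it writes down the candidate $C^\ast=\langle \hat{F}^\ast_0\rangle\oplus\langle U\hat{F}^\ast_2\rangle\oplus\langle 2\hat{F}^\ast_3\rangle\oplus\langle 2U\hat{F}^\ast_6\rangle\oplus\langle (2+Um_f)\hat{F}^\ast_5\rangle\oplus\langle\langle 2,U\rangle\hat{F}^\ast_4\rangle$, checks only the one-sided containment $C^\ast\subseteq C^\perp$ by verifying that the relevant products $\hat{F}_i\hat{F}^\ast_j$ vanish modulo $x^n-1$ (the reciprocal-polynomial characterisation of orthogonality you also cite), and then lets the cardinality count force $C^\ast=C^\perp$. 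In other words, the step you describe as a mere ``sanity check'' is the paper's decisive step, and the step you flag as your main obstacle --- establishing the exact annihilator pairings $\langle 1\rangle\leftrightarrow\langle 0\rangle$, $\langle 2U\rangle\leftrightarrow\langle\langle 2,U\rangle\rangle$ and the self-pairings of $\langle U\rangle$, $\langle 2\rangle$, $\langle 2+Um_f\rangle$ inside each non-commutative local factor $\mathcal{R}[x]/\langle F_i\rangle$ --- is never carried out and never needed. Your pairings are consistent with the statement and with the local size identity $|M|\cdot|\mathrm{Ann}(M)|=4^{4\deg F_i}$, so your route would work if completed, and it would buy a structurally sharper result (an intrinsic description of the dual of each module type, independent of counting); but it is strictly more work, and the delicate points you correctly worry about (e.g.\ $U^2=2U\neq 0$, so $\langle U\rangle$ is not elementwise self-orthogonal, and the unit appearing opposite $(2+Um_f)$ should really live in $\mathbb{F}_4[x]/\langle F^\ast\rangle$) all evaporate once you settle for containment plus counting. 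I would therefore restructure your argument to prove only $C^\ast\subseteq C^\perp$ via the divisibility $x^n-1\mid\hat{F}_i\hat{F}^\ast_j$ for the index pairs occurring in the two decompositions, and conclude by the size comparison, which is precisely the paper's proof.
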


\begin{proof}
From the Theorem \ref{p1_thm5}, $\mid C \mid = 4^{4degF_1 + 2degF_2 + 2degF_3 + degF_4 + 2degF_5 + 3degF_6}$.  Since $\mid C \mid \mid C^\perp \mid=4^{4n}$ and $n=degF_1+degF_2+degF_3+degF_4+degF_5+degF_6$, so $\mid C^\perp \mid=4^{4degF_0+2degF_2+2degF_3+3degF_4+2degF_5+degF_6}$.  

We denote $C^\ast=\left\langle \hat{F}^\ast_0  \right\rangle $ $\oplus $   $\left\langle U\hat{F}^\ast_2  \right\rangle $ $\oplus $  $ \left\langle 2\hat{F}^\ast_3  \right\rangle $ $\oplus $ $\left\langle 2U\hat{F}^\ast_6  \right\rangle $ $\oplus $  $\left\langle (2+Um_f)\hat{F}^\ast_5  \right\rangle $ $\oplus $  $ \left\langle \left\langle 2, U\right\rangle  \hat{F}^\ast_4  \right\rangle  $. For $i, j$, $0 \leq i, j \leq 6$, if   $i+1=7-j+1$, i.e., $i=7-j,$ we can see that  $\hat{F}_{i+1}\hat{F}^\ast_{7-i+1}=0$. If $i+1\neq7-j+1$, i.e., $i\neq7-j,$ then we have $ x^n-1\mid\hat{F}_{i+1}\hat{F}^\ast_{7-i+1}$. Then it follows that $\hat{F}_{i+1}\hat{F}^\ast_{7-i+1}=0$. Therefore $C^\ast \subseteq C^\perp$. Note that
$\mid\hat{F}^\ast_{0}\mid=4^{4degF_0}$,    $\mid  U\hat{F}^\ast_{2}\mid=4^{2degF_2}$, $\mid 2\hat{F}^\ast_{3}\mid=4^{2degF_3}$, 
$\mid 2U\hat{F}^\ast_{6}\mid=4^{degF_6}$, $\mid (2+Um_f)\hat{F}^\ast_{5}\mid=4^{2degF_5}$, $\mid \langle2, U\rangle\hat{F}^\ast_{4}\mid=4^{degF_4}$. Hence $|C^\ast| = 4^{4degF_0+2degF_2+2degF_3+3degF_4+2degF_5+degF_6} = |C^\perp|$. Consequently $C^\ast = C^\perp$.
\end{proof}

%%%%%%%%%%%%%%%%%%%%%%%%%%%%%%%%%%%%%%%%%%%%%%%%%%%%%%%%%%%%%%%%%%%%%%%%%%%%%%%%%%%%%%%%%%%%%%%%%%%%%%%%%%
\begin{thm}\label{p1_thm9}
Let $C$ be a cyclic code of length $n$ over $\mathcal{R}$ with $C$ and $C^\perp$ defined as in Theorem \ref{p1_thm8},
%=\left\langle \hat{F}^\ast_0  \right\rangle $ $\oplus $   $\left\langle U\hat{F}^\ast_2  \right\rangle $ $\oplus $  $ \left\langle 2\hat{F}^\ast_3  \right\rangle $ $\oplus $ $\left\langle 2U\hat{F}^\ast_6  \right\rangle $ $\oplus $  $\left\langle (2+Um_f)\hat{F}^\ast_5  \right\rangle $ $\oplus $  $ \left\langle \left\langle 2,U\right\rangle  \hat{F}^\ast_4  \right\rangle$, where $m_{f}$ is an unit in $\dfrac{\mathbb{F}_4[x]}{\left\langle f \right\rangle }$ 
and $F^\ast=\hat{F}^\ast_0+U\hat{F}^\ast_2+2\hat{F}_3+2U\hat{F}^\ast_6+(2+Um_f)\hat{F}^\ast_5+\left\langle 2, U\right\rangle \hat{F}^\ast_4$. Then $C^\perp=\left\langle  F^\ast \right\rangle$.
\end{thm}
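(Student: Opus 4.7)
The plan is to mirror the proof of Theorem \ref{p1_thm6} verbatim, with every $F_i, \hat F_i$ replaced by its reciprocal. By Theorem \ref{p1_thm8}, $F^\ast$ is a sum of the six generators of $C^\perp$, so the inclusion $\langle F^\ast\rangle\subseteq C^\perp$ is immediate. The work is to recover each of the six summands $\hat F_0^\ast$, $U\hat F_2^\ast$, $2\hat F_3^\ast$, $(2+Um_f)\hat F_5^\ast$, $2U\hat F_6^\ast$, and $\langle 2,U\rangle \hat F_4^\ast$ from $F^\ast$ by multiplication inside $\mathcal{R}[x]/\langle x^n-1\rangle$.

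First I would record the two ingredients that drive the proof. Reciprocation preserves coprimality, so the $F_i^\ast$'s are pairwise coprime factors (up to a unit) of $x^n-1$; consequently, for distinct indices $i\ne j$ the product $\hat F_i^\ast \hat F_j^\ast$ contains every $F_k^\ast$ as a factor and is therefore divisible by $x^n-1$, hence vanishes in the quotient. Moreover, since $F_i^\ast$ and $\hat F_i^\ast$ are themselves coprime, for each $i$ there exist $a_i,b_i\in\mathbb{F}_4[x]$ with $a_iF_i^\ast+b_i\hat F_i^\ast = 1$. These are precisely the facts used in Theorem \ref{p1_thm6}, now in reciprocated form.

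The extraction step then proceeds one index at a time. To isolate the component at $k\in\{0,2,3,4,5,6\}$, I would form the product $\prod_{i\ne k}(a_iF_i^\ast + b_i\hat F_i^\ast) = 1$ and expand it. Upon multiplication by $\hat F_k^\ast$, every cross term containing at least one factor $\hat F_i^\ast$ with $i\ne k$ dies by the orthogonality above, leaving only $\bigl(\prod_{i\ne k}a_iF_i^\ast\bigr)\hat F_k^\ast = \hat F_k^\ast$. Multiplying $F^\ast$ by $\prod_{i\ne k}a_iF_i^\ast$ then annihilates every other summand $\alpha_j\hat F_j^\ast$ of $F^\ast$ with $j\ne k$ (since $F_j^\ast$ divides that product whenever $j\ne k$) and returns exactly the $k$-th summand $\alpha_k\hat F_k^\ast$. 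Iterating over all six values of $k$ places each generator of $C^\perp$ inside $\langle F^\ast\rangle$, giving $C^\perp\subseteq\langle F^\ast\rangle$ and hence equality.

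The main point of care, and the only place where this is not a mechanical translation of Theorem \ref{p1_thm6}, is the non-commutativity of $\mathcal{R}$. The decomposition of Theorem \ref{p1_thm8} is built from right $\mathcal{R}$-submodules, so each extraction multiplication must be performed on the correct side, with the scalar factors $U$, $2$, and $(2+Um_f)$ kept to the left of the polynomial parts $\hat F_i^\ast$ (which lie in the commutative centre $\mathbb{F}_4[x]/\langle x^n-1\rangle$). In addition, for $k=4$ the coefficient $\alpha_4=\langle 2,U\rangle$ is a two-element ideal rather than a single scalar, so once $\langle 2,U\rangle\hat F_4^\ast\in\langle F^\ast\rangle$ is established, the two individual generators $2\hat F_4^\ast$ and $U\hat F_4^\ast$ are recovered by further left multiplication by $2$ and $U$ respectively. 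Apart from this bookkeeping, no new idea is required.
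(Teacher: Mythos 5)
Your proposal is correct and follows essentially the same route as the paper, which simply invokes the argument of Theorem \ref{p1_thm6} using the reciprocated facts that $\hat{F}^\ast_i\hat{F}^\ast_j=0$ for $i\neq j$ and that $F^\ast_i,\hat{F}^\ast_i$ are coprime. Your write-up is in fact more explicit than the paper's (which leaves the extraction step implicit), and your attention to the side of multiplication and to splitting $\left\langle 2,U\right\rangle\hat{F}^\ast_4$ into its two generators is a correct and welcome elaboration rather than a deviation.
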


\begin{proof}
The result follows from a similar argument as in the proof of Theorem \ref{p1_thm6} as $\hat{F}^\ast_i\hat{F}^\ast_j=0$  and $\hat{F}^\ast_i, F^\ast_j$ are coprime for any $i, j$, $0 \leq i,j \leq 6$.
\end{proof} 

%%%%%%%%%%%%%%%%%%%%%%%%%%%%%%%%%%%%%%%%%%%%%%%%%%%%%%%%%%%%%%%%%%%%%%%%%%%%%%%%%%%%%%%%%%%%%%%%%%%%%%%%%%
\begin{thm}\label{p1_thm10}
Let $C$ be a cyclic code of length $n$ over $\mathcal{R}$. Then there exists a family of polynomials $F^\ast, G^\ast, H^\ast, Q^\ast, T^\ast\in \mathbb{F}_4[x] $ which are divisors of $x^n-1$ such that $C^\perp=\left\langle F^\ast \right\rangle_{R} \oplus U\left\langle G^\ast \right\rangle_{R} \oplus 2\left\langle H^\ast \right\rangle_{R}\oplus  2U\left\langle Q^\ast \right\rangle_{R}\oplus (2+Um_f) \left\langle T^\ast \right\rangle_{R}$, where $m_{f}$ is an unit in $\dfrac{\mathbb{F}_4[x]}{\left\langle f \right\rangle}$. Moreover $\vert C^\perp \vert = 4^{5n-(degF^\ast+degG^\ast+degH^\ast+degQ^\ast+degT^\ast)}$. 
 \end{thm}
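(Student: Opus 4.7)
The plan is to reduce the statement to Theorem~\ref{p1_thm7} applied to $C^\perp$. There are two things to check: first, that $C^\perp$ is itself a cyclic code over $\mathcal{R}$, so that Theorem~\ref{p1_thm7} is applicable; and second, that the resulting decomposition and cardinality formula are consistent with the claim.

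For the cyclicity of $C^\perp$, one can either argue abstractly via the identity $\tau(\mathbf{x})\cdot\mathbf{y}=\mathbf{x}\cdot\tau^{-1}(\mathbf{y})$, where $\tau$ is the cyclic shift---which gives immediately that $\tau(\mathbf{y})\in C^\perp$ whenever $\mathbf{y}\in C^\perp$, since $C$ is closed under $\tau$---or one can invoke Theorem~\ref{p1_thm8} directly, since that theorem exhibits $C^\perp$ explicitly as a direct sum of principal cyclic ideals $\langle\hat{F}_0^\ast\rangle,\langle U\hat{F}_2^\ast\rangle,\langle 2\hat{F}_3^\ast\rangle,\langle 2U\hat{F}_6^\ast\rangle,\langle(2+Um_f)\hat{F}_5^\ast\rangle,\langle\langle 2,U\rangle\hat{F}_4^\ast\rangle$, making its cyclicity manifest.

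With cyclicity in hand, Theorem~\ref{p1_thm7} produces the decomposition $C^\perp=\langle F^\ast\rangle_R\oplus U\langle G^\ast\rangle_R\oplus 2\langle H^\ast\rangle_R\oplus 2U\langle Q^\ast\rangle_R\oplus (2+Um_f)\langle T^\ast\rangle_R$ with polynomials $F^\ast,G^\ast,H^\ast,Q^\ast,T^\ast\in\mathbb{F}_4[x]$ dividing $x^n-1$, together with the cardinality $|C^\perp|=4^{5n-(\deg F^\ast+\deg G^\ast+\deg H^\ast+\deg Q^\ast+\deg T^\ast)}$. The starred notation signals that these polynomials can be recovered from the reciprocals of the $F_i$'s appearing in the factorization $x^n-1=F_0F_1\cdots F_6$ underlying Theorem~\ref{p1_thm8}, paralleling how $F,G,H,Q,T$ in Theorem~\ref{p1_thm7} arise from the $F_i$'s themselves.

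The only technical point that I expect to require care is the bookkeeping that expresses $F^\ast,G^\ast,H^\ast,Q^\ast,T^\ast$ explicitly as products of the reciprocal factors $F_i^\ast$ and confirms that the exponent $5n-(\deg F^\ast+\cdots+\deg T^\ast)$ is consistent both with the count given in Theorem~\ref{p1_thm8} and with the duality constraint $|C|\cdot|C^\perp|=|\mathcal{R}|^n=4^{4n}$. This is precisely the same grouping that takes the six-ideal form of Theorem~\ref{p1_thm5} to the five-ideal form of Theorem~\ref{p1_thm7}, now executed with reciprocal polynomials; since the authors dispatched that step by appeal to \cite{Luo2017}, the identical appeal closes the present proof.
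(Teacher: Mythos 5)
Your proposal is correct and follows essentially the same route as the paper, whose entire proof is an appeal to Theorem~\ref{p1_thm7} applied to $C^\perp$. You additionally justify the cyclicity of $C^\perp$ (via the shift identity or via Theorem~\ref{p1_thm8}), which the paper leaves implicit but which does not change the argument.
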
 
 
\begin{proof}
Follows fromTheorem \ref{p1_thm7}. 
\end{proof}

%%%%%%%%%%%%%%%%%%%%%%%%%%%%%%%%%%%%%%%%%%%%%%%%%%%%%%%%%%%%%%%%%%%%%%%%%%%%%%%%%%%%%%%%%%%%%%%%%%%%%%%%%%
%A cyclic code $C$ over $\mathcal{R}$ is called self-dual code if $C=C^\perp$. 
We now prove the main result of this section, a condition for a cyclic code $C$ over $\mathcal{R}$ to be self-dual. From Theorem \ref{p1_thm6} and Theorem \ref{p1_thm9}, we can see that a cyclic codes $C$ is self-dual if and only if $F=F^\ast$. 
%In other words $$\hat{F}_1+U\hat{F}_2+2\hat{F}_3+2U\hat{F}_4+(2+Um_f)\hat{F}_5+\left\langle 2, U\right\rangle \hat{F}_6=\hat{F}^\ast_0+U\hat{F}^\ast_2+2\hat{F}_3+2U\hat{F}^\ast_6+(2+Um_f)\hat{F}^\ast_5+\left\langle 2, U\right\rangle \hat{F}^\ast_4.$$ 
This implies  that 
$$ \hat{F}_1=\hat{F}^\ast_0, ~~ \hat{F}_2=\hat{F}^\ast_2,~~ \hat{F}_3=\hat{F}^\ast_3,~~\hat{F}_4=\hat{F}^\ast_6,~~\hat{F}_5=\hat{F}^\ast_5,~~ \hat{F}_6=\hat{F}^\ast_4.$$
Again since $\hat{F}_i=\dfrac{x^n-1}{F_i}$, $\hat{F}^\ast_j=\dfrac{x^n-1}{F^\ast_j}$ and $\hat{F}_i=\hat{F}^\ast_j$, we have $F_i=F^\ast_j$. Hence proved the following results.

%%%%%%%%%%%%%%%%%%%%%%%%%%%%%%%%%%%%%%%%%%%%%%%%%%%%%%%%%%%%%%%%
\begin{thm}\label{p1_thm11}
Let $C$ be a cyclic code of length $n$ over $\mathcal{R}$ with $C=\left\langle \hat{F}_1  \right\rangle $ $\oplus $   $\left\langle U\hat{F}_2  \right\rangle $ $\oplus $  $ \left\langle 2\hat{F}_3  \right\rangle $ $\oplus $ $\left\langle 2U\hat{F}_4  \right\rangle $ $\oplus $  $\left\langle (2+Um_f)\hat{F}_5  \right\rangle $ $\oplus $  $ \left\langle \left\langle 2 ,U\right\rangle  \hat{F}_6  \right\rangle$, where $m_{f}$ is an unit in $\dfrac{\mathbb{F}_4[x]}{\left\langle f \right\rangle}$. Then $C$ is self-dual code 
%iff satisfies the following conditions $$\langle \hat{F}_1=\hat{F}^\ast_0 \rangle,~~\langle \hat{F}_2=\hat{F}^\ast_2 \rangle,~~\langle \hat{F}_3=\hat{F}^\ast_3 \rangle,~~\langle \hat{F}_4=\hat{F}^\ast_6 \rangle,~~\langle \hat{F}_5=\hat{F}^\ast_5 \rangle,~~\langle \hat{F}_6=\hat{F}^\ast_4 \rangle.$$
%\end{thm}
%\begin{cor}
%Let $C$ be a cyclic code of length $n$ over $\mathcal{R}$ with $C=\left\langle \hat{F}_1  \right\rangle $ $\oplus $   $\left\langle U\hat{F}_2  \right\rangle $ $\oplus $  $ \left\langle 2\hat{F}_3  \right\rangle $ $\oplus $ $\left\langle 2U\hat{F}_4  \right\rangle $ $\oplus $  $\left\langle (2+Um_f)\hat{F}_5  \right\rangle $ $\oplus $  $ \left\langle \left\langle 2,U\right\rangle  \hat{F}_6  \right\rangle$, where $m_{f}$ is an unit in $\dfrac{\mathbb{F}_4[x]}{\left\langle f \right\rangle}$. Then $C$ is the self-dual code iff satisfies the following conditions 
if and only if $F_1=F^\ast_0,~~ F_2=F^\ast_2,~~ F_3=F^\ast_3,~~ F_4=F^\ast_6,~~ F_5=F^\ast_5.$
%\end{cor}
\end{thm}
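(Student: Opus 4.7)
The plan is to derive Theorem \ref{p1_thm11} as a direct corollary of the single-generator descriptions of $C$ and $C^{\perp}$ established in Theorems \ref{p1_thm6} and \ref{p1_thm9}. Those theorems say $C = \langle F \rangle$ and $C^{\perp} = \langle F^{\ast} \rangle$, so $C$ being self-dual is equivalent to $\langle F \rangle = \langle F^{\ast} \rangle$ inside $\mathcal{R}[x]/\langle x^{n}-1\rangle$. Since both $F$ and $F^{\ast}$ are specific sums of six ``CRT-component'' generators attached to the irreducible factors of $x^{n}-1$, I expect to translate $F = F^{\ast}$ into the desired polynomial equalities by matching summands one at a time.

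First I would establish that the decomposition $F = \hat{F}_1 + U\hat{F}_2 + 2\hat{F}_3 + 2U\hat{F}_4 + (2+Um_f)\hat{F}_5 + \langle 2, U\rangle \hat{F}_6$ is unique; this is inherited from the direct sum in Theorem \ref{p1_thm5} together with the pairwise annihilations $\hat{F}_i \hat{F}_j = 0$ for $i \ne j$ used in the proof of Theorem \ref{p1_thm6}. Matching components of $F = F^{\ast}$ accordingly yields the six identities
\begin{equation*}
\hat{F}_1 = \hat{F}^{\ast}_0,\; \hat{F}_2 = \hat{F}^{\ast}_2,\; \hat{F}_3 = \hat{F}^{\ast}_3,\; \hat{F}_4 = \hat{F}^{\ast}_6,\; \hat{F}_5 = \hat{F}^{\ast}_5,\; \hat{F}_6 = \hat{F}^{\ast}_4,
\end{equation*}
already recorded just above the theorem statement. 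Converting these via $F_i = (x^{n}-1)/\hat{F}_i$ and the fact that reciprocation distributes over complementary factors of the self-reciprocal polynomial $x^{n}-1$ (valid because $n$ is odd) produces $F_1 = F^{\ast}_0$, $F_2 = F^{\ast}_2$, $F_3 = F^{\ast}_3$, $F_4 = F^{\ast}_6$, $F_5 = F^{\ast}_5$, and $F_6 = F^{\ast}_4$. Involutivity of the reciprocal map collapses the last two into a single condition, leaving exactly the five listed in the theorem. The converse direction just runs this chain backwards: the $F$-equalities imply the $\hat{F}$-equalities, whence $F = F^{\ast}$ and therefore $C = \langle F\rangle = \langle F^{\ast}\rangle = C^{\perp}$.

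The main obstacle I anticipate is justifying the component-wise matching step, which hinges on the uniqueness of the six-summand decomposition in Theorem \ref{p1_thm6}. This is slightly delicate because one of the summands, $\langle 2, U\rangle \hat{F}_6$, is a two-generator module rather than a cyclic one, so care is needed to verify that the projection isolating this particular summand is well-defined on $\mathcal{R}[x]/\langle x^{n}-1\rangle$. Once the uniqueness is in place, the remainder of the argument is routine bookkeeping with divisors of $x^{n}-1$, and the biconditional follows.
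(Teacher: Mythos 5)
Your proposal is correct and follows essentially the same route as the paper: the paper also derives the theorem by combining Theorems \ref{p1_thm6} and \ref{p1_thm9} to reduce self-duality to $F=F^{\ast}$, matches the six summands to obtain $\hat{F}_1=\hat{F}^{\ast}_0$, $\hat{F}_2=\hat{F}^{\ast}_2$, etc., and then passes from $\hat{F}_i=\hat{F}^{\ast}_j$ to $F_i=F^{\ast}_j$ via $\hat{F}_i=(x^n-1)/F_i$. Your added remarks on the uniqueness of the six-summand decomposition and on involutivity of $\ast$ collapsing $F_6=F^{\ast}_4$ into $F_4=F^{\ast}_6$ only make explicit points the paper leaves implicit.
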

%%%%%%%%%%%%%%%%%%%%%%%%%%%%%%%%%%%%%%%%%%%%%%%%%%%%%%%%%%%%%%%%%%%%%%%%%%%%%%%%%%%%%%%%%%%%%%%%%%%%%%%%%%
\begin{thm}\label{p1_thm12}
Let $C$ be a cyclic code of length $n$ over $\mathcal{R}$ with $C=\left\langle F \right\rangle_{R} \oplus U\left\langle F \right\rangle_{R} \oplus 2\left\langle F \right\rangle_{R}\oplus  2U\left\langle F \right\rangle_{R}\oplus (2+Um_f) \left\langle F \right\rangle_{R},$  where $m_{f}$ is an unit in $\dfrac{\mathbb{F}_4[x]}{\left\langle f \right\rangle}$. Then $C$ is self-dual code if and only if $F=F^\ast,~~ G=G^\ast,~~ H=H^\ast,~~ Q=Q^\ast,~~ T=T^\ast.$
\end{thm}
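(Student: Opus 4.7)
The plan is to adapt the argument used for Theorem \ref{p1_thm11} to the alternative presentation of cyclic codes furnished by Theorem \ref{p1_thm7}; I read the given decomposition (which as printed uses the symbol $F$ in all five slots) as $C = \langle F \rangle_R \oplus U\langle G \rangle_R \oplus 2\langle H \rangle_R \oplus 2U\langle Q \rangle_R \oplus (2+Um_f)\langle T \rangle_R$, so that it matches Theorem \ref{p1_thm7}. The first step is to invoke Theorem \ref{p1_thm10} to write the dual as $C^\perp = \langle F^\ast \rangle_R \oplus U\langle G^\ast \rangle_R \oplus 2\langle H^\ast \rangle_R \oplus 2U\langle Q^\ast \rangle_R \oplus (2+Um_f)\langle T^\ast \rangle_R$; after that the theorem becomes a uniqueness-of-generators statement.

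The sufficient direction is immediate: if the five equalities $F=F^\ast$, $G=G^\ast$, $H=H^\ast$, $Q=Q^\ast$, $T=T^\ast$ hold, then the decompositions of $C$ and $C^\perp$ agree summand by summand, hence $C = C^\perp$, so $C$ is self-dual. For the necessary direction I need to argue that whenever a cyclic code admits a presentation of the form guaranteed by Theorem \ref{p1_thm7}, the generators $F,G,H,Q,T$ (as monic divisors of $x^n-1$) are uniquely determined by $C$.

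To prove this uniqueness I would peel off the summands one at a time via the filtration $\mathcal{R} \supset \langle 2, U\rangle \supset 2U\mathcal{R}$. Reduction modulo $\langle 2,U\rangle$ sends $\mathcal{R}[x]/\langle x^n-1\rangle$ to $R$, and under this map the image of $C$ is exactly $\langle F\rangle_R$, which pins down $F$ as a monic divisor of $x^n-1$. Intersecting $C$ with the kernel of this reduction and then projecting modulo the appropriate submodule isolates the pure $U$-component and the pure $2$-component, recovering $G$ and $H$; iterating with the deepest layer $2U\mathcal{R}$ recovers $Q$; finally, the component $(2+Um_f)\langle T\rangle_R$ can be extracted by working in the quotient modulo $f$, where $m_f$ is a unit and $(2+Um_f)$ has a well-defined "slope" distinguishing it from the pure $2$ and $2U$ pieces, so that the intersection $C \cap (2+Um_f)\mathcal{R}[x]/\langle x^n-1\rangle$ yields $T$. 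Once uniqueness is established, the equality $C = C^\perp$ together with the two explicit decompositions forces the five polynomial equalities stated.

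The main obstacle I anticipate is disentangling $(2+Um_f)\langle T\rangle_R$ from the $2\langle H\rangle_R$ and $2U\langle Q\rangle_R$ summands, since $(2+Um_f)$ genuinely mixes the $2$- and $U$-layers; here I would rely crucially on $m_f$ being invertible in $\mathbb{F}_4[x]/\langle f\rangle$ to produce a clean projection that sees only the $(2+Um_f)$-contribution. The other uniqueness statements are comparatively routine filtration arguments, but the bookkeeping must be done carefully in the noncommutative setting of $\mathcal{R}$.
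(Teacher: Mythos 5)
Your reading of the statement is the right one: the five slots should carry $F, G, H, Q, T$ as in Theorem \ref{p1_thm7}, and the paper itself supplies no separate proof of Theorem \ref{p1_thm12} --- it is swept up in the sentence ``Hence proved the following results'' after the discussion preceding Theorem \ref{p1_thm11}. That discussion takes a different route from yours: it collapses $C$ and $C^\perp$ to single generators via Theorems \ref{p1_thm6} and \ref{p1_thm9}, declares that self-duality is equivalent to $F = F^\ast$ for those single generators, and reads off the componentwise identities $\hat{F}_i = \hat{F}_j^\ast$, hence $F_i = F_j^\ast$. Your sufficiency direction (match the two decompositions summand by summand using Theorem \ref{p1_thm10}) is fine and is essentially what the paper intends, modulo the paper's own ambiguity about what the starred polynomials in Theorem \ref{p1_thm10} denote.

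The necessity direction is where your proposal has a genuine gap, and it is exactly the one you flag yourself. Recovering $F$ by reduction modulo $\langle 2, U\rangle$ works: every summand except $\left\langle F \right\rangle_{R}$ dies, so the image of $C$ is $\left\langle F\right\rangle$ in $R$ and $F$ is pinned down as a monic divisor of $x^n-1$. But the next step fails as described: for a codeword $f + Ug + 2h + 2Uq + (2+Um_f)t$, the coefficient of $U$ is $g + m_f t$ and the coefficient of $2$ is $h + t$, so intersecting with the kernel and ``projecting onto the $U$- and $2$-components'' recovers $\left\langle G\right\rangle + m_f\left\langle T\right\rangle$ and $\left\langle H\right\rangle + \left\langle T\right\rangle$, not $\left\langle G\right\rangle$ and $\left\langle H\right\rangle$. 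The invertibility of $m_f$ does not by itself separate these; one needs an actual device (for instance, comparing the ideals $2C$, $UC$, $2UC$ of $R$, or imposing divisibility relations among $F, G, H, Q, T$ that make the presentation canonical), and none is exhibited. Until that extraction is carried out, ``$C = C^\perp$ forces the five polynomial equalities'' is asserted rather than proved. To be fair, the paper's argument carries the same uniqueness gap --- it nowhere shows that equality of the codes forces equality of the chosen generators --- but your proposal makes the missing step visible without closing it.
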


%%%%%%%%%%%%%%%%%%%%%%%%%%%%%%%%%%%%%%%%%%%%%%%%%%%%%%%%%%%%%%%%%%%%%%%%%%%%%%%%%%%%%%%%%%%%%%%%%%%%%%%%%
\section{Hermitian Self-dual cyclic codes over $M_2(\mathbb{Z}_4)$}
For any two codewords $\textbf{x}=(x_1, x_2, \dots, x_n)$, $\textbf{y}=(y_1, y_2, \dots, y_n)\in \mathcal{R}^n$, the Hermitian inner product is defined as $$\langle\textbf{x},\textbf{y}\rangle=\textbf{x} \cdot {\bar{\textbf{y}}}=x_1\bar{y_1}+x_2\bar{y_2}+ \cdots+x_n\bar{y_n},$$ where  `` $\bar{~}$ " called conjugation, for example, $\bar{0}=0$, $\bar{1}=1$, $\bar{w}=w^2$, $\bar{w^2}=w$.  The Hermitian dual of $C$, denoted by $C^{\perp_{H}}$,  is define as $$C^{\perp_{H}}=\left\lbrace\textbf{x} \in \mathcal{R}^n \mid \langle\textbf{x},\textbf{y}\rangle=0,\forall y\in C \right\rbrace .$$  We can see that $\bar{C}^\perp=C^{\perp_{H}}$. As usual $C$ is called Hermitian self-orthogonal and  Hermitian self-dual if $C\subseteq C^{\perp_{H}}$ and $C=C^{\perp_{H}}$, respectively. 

Let $f(x)=a_0+a_1x+ \cdots +a_{k-1}x^{k-1} + a_{k}x^{k}$ be a polynomial of degree $k$ with $a_{k}\neq 0$, $a_{0}\neq 0$. The reciprocal $f^\ast(x)$ of $f(x)$ is defined by $$f^\ast(x)=a_0^{-1}x^kf(x^{-1}).$$
Denote $\bar{f}(x)=a^2_0+a^2_1x+ \cdots +a^2_{k-1}x^{k-1} + a^2_{k}x^{k}$. It is easy to check that two operations $\ast$ and $\bar{•}$ are commutative, i.e., $\bar{(f^\ast)}(x)=(\bar{f})^\ast(x)$. All the theorems proved in previous section are true with respect to Hermitian inner product as well. So state them here  without proofs.
%%%%%%%%%%%%%%%%%%%%%%%%%%%%%%%%%%%%%%%%%%%%%%%%%%%%%%%%%%%%%%%%%%%%%%%%%%%%%%%%%%%%%%%%%%
\begin{thm}\label{p1_thm13}
Let $C$ be a cyclic code of length $n$ over $\mathcal{R}$ with $C=\left\langle \hat{F}_1  \right\rangle $ $\oplus $   $\left\langle U\hat{F}_2  \right\rangle $ $\oplus $  $ \left\langle 2\hat{F}_3  \right\rangle $ $\oplus $ $\left\langle 2U\hat{F}_4  \right\rangle $ $\oplus $  $\left\langle (2+Um_f)\hat{F}_5  \right\rangle $ $\oplus $  $ \left\langle \left\langle 2, U\right\rangle  \hat{F}_6  \right\rangle$, where $m_{f}$ is an unit in $\dfrac{\mathbb{F}_4[x]}{\left\langle f \right\rangle}$. Then $C^{\perp_{H}}=\left\langle \hat{\bar{F}}^\ast_0  \right\rangle $ $\oplus $   $\left\langle U\hat{\bar{F}}^\ast_2  \right\rangle $ $\oplus $  $ \left\langle 2\hat{\bar{F}}^\ast_3  \right\rangle $ $\oplus $ $\left\langle 2U\hat{\bar{F}}^\ast_6  \right\rangle $ $\oplus $  $\left\langle (2+Um_f)\hat{\bar{F}}^\ast_5  \right\rangle $ $\oplus $  $ \left\langle \left\langle 2, U\right\rangle  \hat{\bar{F}}^\ast_4  \right\rangle$ and $\mid C^{\perp_{H}}\mid=4^{4degF_0+2degF_2+2degF_3+3degF_4+2degF_5+degF_6}$.
\end{thm}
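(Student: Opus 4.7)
The plan is to reduce the Hermitian case to the Euclidean case already handled in Theorem~\ref{p1_thm8}, by exploiting the identity $C^{\perp_H}=\bar{C}^{\perp}$ recorded in the paragraph preceding the theorem. Here $\bar{C}$ is the image of $C$ under coordinate-wise conjugation, extended to polynomials by acting on each coefficient; one checks immediately that conjugation is a ring automorphism of $\mathcal{R}$ fixing the $\mathbb{Z}_4$-subring and the element $U$ (since both lie in the fixed locus of the Frobenius on $\mathbb{F}_4$).

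I would then apply this conjugation to each generator of $C$. Because conjugation is a ring automorphism, the image is again a direct sum of the same six types of ideals, with each factor $\hat{F}_i$ replaced by $\hat{\bar{F}}_i$, giving
\[
\bar{C} = \left\langle \hat{\bar{F}}_1 \right\rangle \oplus \left\langle U\hat{\bar{F}}_2 \right\rangle \oplus \left\langle 2\hat{\bar{F}}_3 \right\rangle \oplus \left\langle 2U\hat{\bar{F}}_4 \right\rangle \oplus \left\langle (2+U\bar{m}_f)\hat{\bar{F}}_5 \right\rangle \oplus \left\langle \left\langle 2,U\right\rangle \hat{\bar{F}}_6 \right\rangle,
\]
which is another cyclic code of length $n$ over $\mathcal{R}$ in the standard form of Theorem~\ref{p1_thm5}; note that $\bar{m}_f$ remains a unit in $\mathbb{F}_4[x]/\langle f\rangle$, so the fifth summand has the required shape. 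Applying Theorem~\ref{p1_thm8} verbatim to $\bar{C}$ then yields a description of $\bar{C}^{\perp}$ in terms of the reciprocals of the $\hat{\bar{F}}_i$, with the same index pairing $1\leftrightarrow 0,\ 2\leftrightarrow 2,\ 3\leftrightarrow 3,\ 4\leftrightarrow 6,\ 5\leftrightarrow 5,\ 6\leftrightarrow 4$ dictated by that theorem. Using the commutation $\overline{f^\ast}=(\bar{f})^\ast$ explicitly recorded in the excerpt, these reciprocals coincide with $\hat{\bar{F}}_i^\ast$, matching the stated form of $C^{\perp_H}$.

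The cardinality assertion then drops out of the corresponding cardinality statement in Theorem~\ref{p1_thm8} applied to $\bar{C}$, together with $\deg \bar{F}_i=\deg F_i$ (since conjugation acts coefficient-wise and preserves degree) and $|C^{\perp_H}|=|\bar{C}^{\perp}|$. The only mild obstacle is the bookkeeping around the generator $(2+Um_f)\hat{F}_5$: one must verify that conjugating it produces $(2+U\bar{m}_f)\hat{\bar{F}}_5$ and that the resulting ideal is of the form covered by Theorem~\ref{p1_thm5}, so that Theorem~\ref{p1_thm8} applies to $\bar{C}$ without modification. Once this is in place, no genuinely new computation is needed beyond what Theorem~\ref{p1_thm8} already supplies.
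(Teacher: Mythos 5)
Your reduction is correct, and it is essentially the paper's (implicit) approach: the paper states Theorem~\ref{p1_thm13} with no proof at all, remarking only that the Euclidean results of the previous section carry over to the Hermitian inner product, and your argument via $C^{\perp_H}=\bar{C}^{\perp}$, applying Theorem~\ref{p1_thm8} to $\bar{C}$ and using $\overline{f^\ast}=(\bar{f})^\ast$ together with $\deg\bar{F}_i=\deg F_i$, is exactly the precise version of that remark. The one point you defer but should record is that coefficientwise conjugation really is a ring automorphism of the noncommutative ring $\mathcal{R}$ compatible with the relation between $U$ and $\mathbb{Z}_4[w]$ (both conjugation and the twist $a\mapsto$ ``$UaU^{-1}$''-type action are the Frobenius, so they commute) and that it commutes with the cyclic shift, so that $\bar{C}$ is again a cyclic code in the standard form of Theorem~\ref{p1_thm5}.
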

\begin{thm}\label{p1_thm14}
Let $C$ be a cyclic code of length $n$ over $\mathcal{R}$ with $C$ and $C^\perp$ defined as in Theorem \ref{p1_thm13}, and $\bar{F}^\ast=\hat{\bar{F}}^\ast_0+U\hat{\bar{F}}^\ast_2+2\hat{\bar{F}}_3+2U\hat{\bar{F}}^\ast_6+(2+Um_f)\hat{\bar{F}}^\ast_5+\left\langle 2, U\right\rangle \hat{\bar{F}}^\ast_4$. Then $C^{\perp_{H}}=\left\langle \bar{F}^\ast \right\rangle$.
\end{thm}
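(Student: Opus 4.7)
The plan is to imitate the proof of Theorem \ref{p1_thm9} almost verbatim, since the conjugation map $a \mapsto \bar a$ is a ring automorphism of $\mathbb{F}_4$ (Frobenius) that extends coefficient-wise to a ring automorphism of $\mathbb{F}_4[x]$, and the paper already records that the $\ast$-operation and the conjugation commute, i.e.\ $\overline{f^\ast} = (\bar f)^\ast$. So the ``barred'' polynomials $\hat{\bar F}^\ast_i$ obey exactly the same divisibility, coprimality and annihilation identities modulo $x^n-1$ that the unbarred reciprocals do in Section~4, and the entire argument of Theorem~\ref{p1_thm9} transports.

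First I would write $C^{\perp_H} = \bar C^{\perp}$ (recorded just before Theorem~\ref{p1_thm13}) and use Theorem~\ref{p1_thm13} to note that the candidate generator $\bar F^\ast$ is the natural ``sum'' analogue of the decomposition given there, exactly as $F^\ast$ was the sum analogue of the decomposition in Theorem~\ref{p1_thm8}. Then, following the steps in the proof of Theorem~\ref{p1_thm6}, I would verify the two ingredients that drive the argument:
\begin{enumerate}
\item For $0\le i,j \le 6$ with $i\ne j$, one has $(x^n-1)\mid \hat{\bar F}^\ast_i\,\hat{\bar F}^\ast_j$, so $\hat{\bar F}^\ast_i\,\hat{\bar F}^\ast_j = 0$ in $\mathcal{R}[x]/\langle x^n-1\rangle$.
\item For each $i$, the pair $\hat{\bar F}^\ast_i$ and $\bar F^\ast_i$ is coprime in $\mathbb{F}_4[x]$.
\end{enumerate}
Both items are immediate because conjugation is a ring automorphism: the factorization $x^n-1 = \bar F^\ast_0 \bar F^\ast_1 \cdots \bar F^\ast_6$ is again a pairwise coprime factorization (note $\overline{x^n-1}=x^n-1$ and the reciprocal of $x^n-1$ is itself up to unit, since $n$ is odd), so the $\hat{\bar F}^\ast_i$'s behave exactly like the $\hat F_i$'s did.

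With these in hand, I would pick Bézout coefficients $a_i,b_i$ with $a_i \bar F^\ast_i + b_i\hat{\bar F}^\ast_i = 1$ and form the product $\prod_{i\ne 6}(a_i \bar F^\ast_i + b_i\hat{\bar F}^\ast_i)=1$. Multiplying both sides by $\bar F^\ast$ and then by $\hat{\bar F}^\ast_6$ and invoking the annihilation identities of step~(1), exactly as in the proof of Theorem~\ref{p1_thm6}, all terms except the one involving $\langle 2,U\rangle \hat{\bar F}^\ast_4$ collapse, giving $\langle 2,U\rangle \hat{\bar F}^\ast_4 \in \langle \bar F^\ast\rangle$. Cycling the index and repeating shows that each of $\hat{\bar F}^\ast_0,\,U\hat{\bar F}^\ast_2,\,2\hat{\bar F}^\ast_3,\,2U\hat{\bar F}^\ast_6,\,(2+Um_f)\hat{\bar F}^\ast_5,\,\langle 2,U\rangle \hat{\bar F}^\ast_4$ belongs to $\langle \bar F^\ast\rangle$, so $C^{\perp_H}\subseteq \langle \bar F^\ast\rangle$. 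The reverse inclusion is clear from the definition of $\bar F^\ast$, and this completes the argument.

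The only mildly delicate point — which I expect to be the main obstacle — is confirming that the bar and the hat-star operations interact correctly on a factorization of $x^n-1$ that is defined implicitly through the indexing convention of Theorem~\ref{p1_thm13} (the index swap $4\leftrightarrow 6$ in the generators). The check amounts to observing that conjugation permutes the irreducible factors of $x^n-1$ over $\mathbb{F}_4$ into another coprime factorization, and hence preserves the coprimality and annihilation data needed; no essentially new computation is required beyond the Euclidean case.
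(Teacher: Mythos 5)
Your proposal is correct and follows essentially the same route the paper intends: the paper states Theorem \ref{p1_thm14} without proof, deferring to the Euclidean case (Theorems \ref{p1_thm6} and \ref{p1_thm9}), and your argument is exactly that transfer, justified by the fact that conjugation is a ring automorphism of $\mathbb{F}_4[x]$ commuting with the reciprocal operation, so the annihilation and coprimality identities carry over. The only cosmetic point is that the term $2\hat{\bar{F}}_3$ in the statement should be read as $2\hat{\bar{F}}^\ast_3$ (a typo inherited from Theorem \ref{p1_thm9}), which is how you correctly treat it.
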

%\begin{proof}
%Use $\hat{\bar{F}}^\ast_i\hat{\bar{F}}^\ast_j=0, $ for any distinct $i, j\in \lbrace 0, 1, 2, 3, 4, 5, 6\rbrace$ and $\hat{\bar{F}}^\ast_i, \bar{F}^\ast_j$ are coprime for any $i, j\in \lbrace 0, 1, 2,3, 4, 5, 6\rbrace$. A similar argument as in the proof of Theorem \ref{p1_thm6} gives the result.
%\end{proof}
%%%%%%%%%%%%%%%%%%%%%%%%%%%%%%%%%%%%%%%%%%%%%%%%%%%%%%%%%%%%%%%%%%%%%%%%%%%%%%%%%%%%%%%%%%%%%%%%%

\begin{thm}\label{p1_thm14 a}
Let $C$ be a cyclic code of length $n$ over $\mathcal{R}$. Then there exists a family of polynomials $\bar{F}^\ast, \bar{G}^\ast, \bar{H}^\ast, \bar{Q}^\ast, \bar{T}^\ast\in \mathbb{F}_4[x] $ which are divisors of $x^n-1$ such that $C^{\perp_{H}}=\left\langle \bar{F}^\ast \right\rangle_{R} \oplus U\left\langle \bar{G}^\ast \right\rangle_{R} \oplus 2\left\langle \bar{H}^\ast \right\rangle_{R}\oplus  2U\left\langle \bar{Q}^\ast \right\rangle_{R}\oplus (2+Um_f) \left\langle \bar{T}^\ast \right\rangle_{R}$, where $m_{f}$ is an unit in $\dfrac{\mathbb{F}_4[x]}{\left\langle f \right\rangle}$. Moreover $\vert C^{\perp_{H}} \vert = 4^{5n-(degF^\ast+degG^\ast+degH^\ast+degQ^\ast+degT^\ast)}$. 
 \end{thm}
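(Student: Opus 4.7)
The plan is to obtain Theorem \ref{p1_thm14 a} as the Hermitian analog of Theorem \ref{p1_thm10}, proved by the same strategy the author uses there: appeal to Theorem \ref{p1_thm7}, which describes the ideal structure of any cyclic code of length $n$ over $\mathcal{R}$, and then match the resulting polynomials to the explicit description of $C^{\perp_H}$ already supplied by Theorem \ref{p1_thm13}.

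First I would verify that $C^{\perp_H}$ is itself a cyclic code of length $n$ over $\mathcal{R}$. Coordinatewise conjugation commutes with the cyclic shift $\tau$, so $\bar{C}$ is cyclic whenever $C$ is. Combined with the observation $C^{\perp_H}=(\bar{C})^\perp$ recorded in the paper and the fact (Theorem \ref{thm2.2} and the earlier cyclic structure results) that the Euclidean dual of a cyclic code is cyclic, this gives that $C^{\perp_H}$ is cyclic. Applying Theorem \ref{p1_thm7} to $C^{\perp_H}$ now yields polynomials $F', G', H', Q', T' \in \mathbb{F}_4[x]$ dividing $x^n-1$ with
$$C^{\perp_H}=\langle F'\rangle_R\oplus U\langle G'\rangle_R\oplus 2\langle H'\rangle_R\oplus 2U\langle Q'\rangle_R\oplus (2+Um_f)\langle T'\rangle_R,$$
together with the cardinality $|C^{\perp_H}|=4^{5n-(\deg F'+\deg G'+\deg H'+\deg Q'+\deg T')}$.

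Second I would identify these polynomials with $\bar{F}^\ast,\bar{G}^\ast,\bar{H}^\ast,\bar{Q}^\ast,\bar{T}^\ast$ by comparison with Theorem \ref{p1_thm13}, which furnishes the explicit $\hat{\cdot}$-decomposition of $C^{\perp_H}$ in terms of the conjugate-reciprocals $\hat{\bar{F}}^\ast_i$. The transition from the $\hat{\cdot}$-form to the $\langle\cdot\rangle_R$-form is precisely the translation carried out, in the Euclidean case, in the passage from Theorem \ref{p1_thm5} to Theorem \ref{p1_thm7}; it goes through verbatim here because the operations $\bar{~}$ and $^\ast$ are both involutions on $\mathbb{F}_4[x]$ (commuting with each other, as noted just before Theorem \ref{p1_thm13}) and both permute the irreducible divisors of $x^n-1$ for $n$ odd. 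The cardinality claim then also matches, using $|C|\cdot|C^{\perp_H}|=|\mathcal{R}|^n=4^{4n}$.

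I do not foresee a serious obstacle: the argument is entirely formal once Theorems \ref{p1_thm7} and \ref{p1_thm13} are in place. The only point requiring any care is ensuring that the conjugate-reciprocals $\bar{F}^\ast_i$ are themselves legitimate divisors of $x^n-1$ in $\mathbb{F}_4[x]$, which is precisely the content of the observation that $\bar{~}$ and $^\ast$ both act as permutations on the set of irreducible factors of $x^n-1$; once this is checked, the decomposition and the cardinality formula follow immediately by setting $\bar{F}^\ast=F'$, $\bar{G}^\ast=G'$, $\bar{H}^\ast=H'$, $\bar{Q}^\ast=Q'$, $\bar{T}^\ast=T'$.
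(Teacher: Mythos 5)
Your proposal is correct and follows essentially the same route as the paper, which states this theorem without proof as the Hermitian analogue of Theorem \ref{p1_thm10} (itself justified only by "follows from Theorem \ref{p1_thm7}"). Your additional care in checking that $C^{\perp_H}=(\bar{C})^\perp$ is cyclic and that $\bar{~}$ and $^\ast$ permute the divisors of $x^n-1$ simply fills in details the paper leaves implicit.
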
 
% 
%\begin{proof}
%A similar argument as in the proof of Theorem \ref{p1_thm7} gives the result. 
%\end{proof}
%%%%%%%%%%%%%%%%%%%%%%%%%%%%%%%%%%%%%%%%%%%%%%%%%%%%%%%%%%%%%%%%%%%%%%%%%%%%%%%%%%%%%%%%%%%%%

 In \cite[Theoem 2]{Alahmadi2013}, the authors have claimed that Hermitian Self-dual codes do not exist over $M_2(\mathbb{Z}_2)$ but which is not true. In this paper, we present a condition for a cyclic code over $M_2(\mathbb{Z}_2)$ to be Hermitian Self-dual and also demonstrate  the same with an example. We also generalise the same to cyclic codes over $M_2(\mathbb{Z}_4)$.

\begin{thm}\label{p1_thm15}
Let $\mathcal{C}=<fh,~ ufg>$ be a cyclic code of length $n$ over $M_2(\mathbb{Z}_2)$  with $x^n-1=fgh$.  Then $\mathcal{C}$ is  Hermitian self-dual code  if and only if $f=\hat{\bar{g}}^\ast$ and $h=\hat{\bar{h}}^\ast$.
\end{thm}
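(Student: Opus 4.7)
The plan is to adapt the arguments of Theorems \ref{p1_thm11} and \ref{p1_thm13} to the simpler ring $M_2(\mathbb{Z}_2)\cong\mathbb{F}_4+u\mathbb{F}_4$, where $u^2=0$. Since $n$ is odd and we work in characteristic $2$, the polynomial $x^n-1$ is squarefree in $\mathbb{F}_4[x]$ and factors uniquely into pairwise coprime monic irreducibles. The proof splits into three steps.

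First, I would classify cyclic codes via the Chinese Remainder decomposition
\[\frac{M_2(\mathbb{Z}_2)[x]}{\langle x^n-1\rangle}\cong\prod_i (K_i+uK_i),\qquad K_i=\frac{\mathbb{F}_4[x]}{\langle p_i\rangle}.\]
Each local factor $K_i+uK_i$ has only three right ideals, namely $\{0\}$, $\langle u\rangle$, and the whole ring. Partitioning the irreducible factors of $x^n-1$ according to which of these three ideals $\mathcal{C}$ projects to in each component yields the factorisation $x^n-1=fgh$ of the hypothesis, and shows $\mathcal{C}=\langle fh,\,ufg\rangle$, where $f$ indexes the ``zero'' components, $h$ the ``$\langle u\rangle$'' components, and $g$ the ``full'' components.

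Second, I would compute $\mathcal{C}^{\perp_{H}}$ by the same strategy as in Theorem \ref{p1_thm13}. The bar--reciprocal operation $p\mapsto\bar p^{\,*}$ permutes the irreducible factors of $x^n-1$, and under the Hermitian inner product the $\langle u\rangle$ component of each local factor is self-orthogonal while the zero and full components swap. Hence the decomposition data of $\mathcal{C}^{\perp_{H}}$ is $(\bar g^{\,*},\bar f^{\,*},\bar h^{\,*})$, giving
\[\mathcal{C}^{\perp_{H}}=\langle \bar g^{\,*}\bar h^{\,*},\,u\bar g^{\,*}\bar f^{\,*}\rangle.\]
The cardinality relation $|\mathcal{C}|\cdot|\mathcal{C}^{\perp_{H}}|=|M_2(\mathbb{Z}_2)|^n$ can be used to upgrade the easy inclusion (orthogonality of the proposed generators against $fh$ and $ufg$) to equality.

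Third, equating $\mathcal{C}=\mathcal{C}^{\perp_{H}}$ and matching the two decomposition triples forces $f=\bar g^{\,*}$, $g=\bar f^{\,*}$, and $h=\bar h^{\,*}$. Because bar and $*$ are involutions, the first two conditions are equivalent, leaving the pair $f=\bar g^{\,*}$ and $h=\bar h^{\,*}$, i.e.\ precisely the pair $f=\hat{\bar g}^{\,*}$, $h=\hat{\bar h}^{\,*}$ as written in the theorem.

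The main obstacle is Step 2: justifying the exact ``swap'' rule for the bar--reciprocal action on the three factor types. Non-commutativity of $M_2(\mathbb{Z}_2)$ demands care in distinguishing left and right annihilators, and one must explicitly verify that the generators $\bar g^{\,*}\bar h^{\,*}$ and $u\bar g^{\,*}\bar f^{\,*}$ are indeed orthogonal to $fh$ and $ufg$ under the Hermitian form --- which ultimately reduces to the identity $\bar p^{\,*}\cdot q\equiv 0\pmod{x^n-1}$ whenever $x^n-1\mid pq$. This is largely a bookkeeping argument parallel to the proof of Theorem \ref{p1_thm13}, but it is the only place where the specific structure of the Hermitian inner product over $\mathbb{F}_4$ is genuinely used.
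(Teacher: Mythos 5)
The paper states this theorem with no proof at all --- it sits bare between the remark disputing \cite[Theorem 2]{Alahmadi2013} and the length-$5$ example --- so there is no argument of the authors' to compare yours against. Your sketch supplies the kind of proof the paper implicitly relies on, and its overall architecture (CRT decomposition, classification of the three right-ideal types per component, the swap rule for the Hermitian dual, and a cardinality count to upgrade inclusion to equality) is the same machinery the paper uses for Theorems \ref{p1_thm5}, \ref{p1_thm8} and \ref{p1_thm13}. Your final condition $f=\bar g^{\,*}$, $h=\bar h^{\,*}$ is confirmed by the paper's own example ($f=f_2$, $g=f_3$, $h=f_1$ with $f_2=\bar f_3^{\,*}$ and $f_1=\bar f_1^{\,*}$); note that you silently read the theorem's $\hat{\bar g}^{\,*}$ as $\bar g^{\,*}$ rather than as $\bigl((x^n-1)/\bar g\bigr)^{*}$, which is the only reading consistent with that example, but you should say so explicitly since elsewhere in the paper $\hat{\ }$ always denotes division into $x^n-1$.

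The one place your argument is genuinely incomplete is exactly the place you flag as ``bookkeeping'': the swap rule in Step 2. In $M_2(\mathbb{F}_2)\cong\mathbb{F}_4+u\mathbb{F}_4$ one has $u\alpha=\bar\alpha u$, so conjugation permutes the $\mathbb{F}_4$-irreducible factors of $x^n-1$; the CRT components are therefore $K_i+uK_{\sigma(i)}$ rather than $K_i+uK_i$, where $\sigma$ is the permutation with $p_{\sigma(i)}=\bar p_i$, and the quotient is not a product of local rings when $p_i\neq\bar p_i$ (compare the paper's Lemma 3.1, where the analogous quotients are only right modules). The three-submodule count survives this correction, but the computation of which component of $\mathcal{C}^{\perp_H}$ pairs with which component of $\mathcal{C}$ now involves both $\sigma$ (from the twist) and the reciprocal map (from reversing coordinates), and these must be composed correctly to land on $\bar p^{\,*}$ rather than, say, $p^{\,*}$ or $\bar p$. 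This is not a step that can be waved through: it is precisely the computation on which Alahmadi et al.\ reached the opposite conclusion (non-existence of Hermitian self-dual codes), so the verification that $\langle\bar g^{\,*}\bar h^{\,*},\,u\bar g^{\,*}\bar f^{\,*}\rangle$ is orthogonal to $\langle fh,\,ufg\rangle$ --- including the cross terms where a $u$ must be commuted past a polynomial before invoking $u^2=0$ and the divisibility criterion $x^n-1\mid p\,q$ --- needs to be written out in full for the proof to stand.
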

In \cite[ $\mathsection$6]{Alahmadi2013}, authors have claimed that there does not exist a nontrivial self-dual cyclic codes of length $5$. A contrary example is the following:

\begin{example}
The factorization of $x^5-1$ is $(x-1)(x^2+wx+1)(x^2+w^2x+1)$ over $\mathbb{F}_4$. Let $f_1=(x-1)$, $f_2=(x^2+wx+1)$ and $f_3=(x^2+w^2x+1)$, then $f_1=\bar{f}^\ast_1$, $f_2=\bar{f}^\ast_3$ and $f_3=\bar{f}^\ast_2$. The following cyclic codes of length $5$ over $M_2(\mathbb{Z}_2)$ are self-dual (Hermitian) codes and their Gray image $\Phi(C)$ has parameters $[10,5,4]$ over $\mathbb{F}_4$.
$$\langle f_1f_2, \quad uf_2f_3 \rangle, \qquad \langle f_1f_3, \quad  uf_2f_3 \rangle.$$
\end{example}

\begin{thm}\label{p1_thm15}
Let $C$ be a cyclic code of length $n$ over $\mathcal{R}$ with $C=\left\langle \hat{F}_1  \right\rangle $ $\oplus $   $\left\langle U\hat{F}_2  \right\rangle $ $\oplus $  $ \left\langle 2\hat{F}_3  \right\rangle $ $\oplus $ $\left\langle 2U\hat{F}_4  \right\rangle $ $\oplus $  $\left\langle (2+Um_f)\hat{F}_5  \right\rangle $ $\oplus $  $ \left\langle \left\langle 2 ,U\right\rangle  \hat{F}_6  \right\rangle$, where $m_{f}$ is an unit in $\dfrac{\mathbb{F}_4[x]}{\left\langle f \right\rangle}$. Then $C$ is Hermitian self-dual code  if and only if $$\hat{F}_1=\hat{\bar{F}}^\ast_0,~~\hat{F}_2=\hat{\bar{F}}^\ast_2,~~\hat{F}_3=\hat{\bar{F}}^\ast_3,~~\hat{F}_4=\hat{\bar{F}}^\ast_6,~~\hat{F}_5=\hat{\bar{F}}^\ast_5,~~ \hat{F}_6=\hat{\bar{F}}^\ast_4.$$
\end{thm}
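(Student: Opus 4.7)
The plan is to prove this as a direct Hermitian analogue of Theorem~\ref{p1_thm11}, using Theorem~\ref{p1_thm13} to pin down the explicit direct-sum decomposition of $C^{\perp_{H}}$ and then matching it term-by-term against the given decomposition of $C$. The argument will parallel what was done for the Euclidean self-dual case, with the operation $f \mapsto f^\ast$ replaced throughout by its composition with conjugation, $f \mapsto \bar{f}^\ast$, exploiting the fact noted in the excerpt that $\bar{(f^\ast)} = (\bar f)^\ast$, so the two operations commute and behave well together.

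First, I would write down both descriptions simultaneously: by hypothesis,
\begin{equation*}
C = \langle \hat{F}_1 \rangle \oplus \langle U\hat{F}_2 \rangle \oplus \langle 2\hat{F}_3 \rangle \oplus \langle 2U\hat{F}_4 \rangle \oplus \langle (2+Um_f)\hat{F}_5 \rangle \oplus \langle \langle 2,U\rangle \hat{F}_6 \rangle ,
\end{equation*}
and by Theorem~\ref{p1_thm13},
\begin{equation*}
C^{\perp_{H}} = \langle \hat{\bar F}^\ast_0 \rangle \oplus \langle U\hat{\bar F}^\ast_2 \rangle \oplus \langle 2\hat{\bar F}^\ast_3 \rangle \oplus \langle 2U\hat{\bar F}^\ast_6 \rangle \oplus \langle (2+Um_f)\hat{\bar F}^\ast_5 \rangle \oplus \langle \langle 2,U\rangle \hat{\bar F}^\ast_4 \rangle .
\end{equation*}
Since $C$ is Hermitian self-dual iff $C=C^{\perp_{H}}$, I would argue that this equality of submodules of $\mathcal{R}[x]/\langle x^n-1\rangle$ forces matching of each of the six summands. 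The reason the summands can be matched uniquely is the module-theoretic CRT (the earlier lemma decomposing $\mathcal{R}[x]/\langle x^n-1\rangle$ into a direct sum over the basic irreducible factors of $x^n-1$), combined with Theorem~\ref{p1_thm3}, which shows that inside each local piece $\mathcal{R}[x]/\langle f_i\rangle$ the six nonzero proper right submodules $\langle 1\rangle, \langle U\rangle, \langle 2U\rangle, \langle (2+Um_f)\rangle, \langle 2\rangle, \langle\langle 2,U\rangle\rangle$ are pairwise distinct. Consequently, the "type" (i.e., which of the six flavours 1, $U$, $2$, $2U$, $(2+Um_f)$, $\langle 2,U\rangle$) of each factor $f_i$ is an invariant of the code, and equating $C$ with $C^{\perp_{H}}$ forces the two descriptions to assign $f_i$ to the same flavour.

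Carrying out this bookkeeping gives the six equalities $\hat{F}_1 = \hat{\bar F}^\ast_0$, $\hat{F}_2 = \hat{\bar F}^\ast_2$, $\hat{F}_3 = \hat{\bar F}^\ast_3$, $\hat{F}_4 = \hat{\bar F}^\ast_6$, $\hat{F}_5 = \hat{\bar F}^\ast_5$, $\hat{F}_6 = \hat{\bar F}^\ast_4$. Note the swap $4\leftrightarrow 6$: this comes from the duality pairing that swaps the flavours $2U$ and $\langle 2,U\rangle$ (they are annihilators of one another in the Hermitian pairing), exactly as in the Euclidean case already established in Theorem~\ref{p1_thm8}. The converse direction is immediate: if the six polynomial equalities hold then the direct-sum descriptions of $C$ and $C^{\perp_{H}}$ are identical summand by summand, so $C = C^{\perp_{H}}$.

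The main obstacle I anticipate is making the uniqueness of flavour-assignment fully rigorous in the non-commutative setting. Two delicate points are (i) the submodule $\langle 2,U\rangle$ is not principal, so one must verify that it is genuinely distinct from $\langle 2\rangle$ and $\langle U\rangle$ in each local piece and that its cardinality forces its match with the $2U$-flavour under duality; and (ii) the "twisted" flavour $(2+Um_f)$ depends on a unit $m_f$ local to each $f_i$, and one must check that the equality $\hat{F}_5 = \hat{\bar F}^\ast_5$ is compatible with the different units $m_f$ and $m_{\bar f^\ast}$ attached to reciprocal factors. Both points are essentially the Hermitian counterparts of the arguments underlying Theorems~\ref{p1_thm8} and~\ref{p1_thm11}, so they should succumb to the same method once the conjugation step is inserted uniformly.
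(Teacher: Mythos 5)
Your proposal is correct and follows essentially the same route as the paper: the paper states this Hermitian theorem without proof, deferring to the Euclidean argument preceding Theorem~\ref{p1_thm11}, which likewise compares the canonical decompositions of $C$ and its dual (via Theorems~\ref{p1_thm6} and~\ref{p1_thm9}, here replaced by their conjugated analogues) and matches the six summands term by term, including the $4\leftrightarrow 6$ swap. Your CRT/flavour-uniqueness justification for why the summands must match is in fact somewhat more careful than the paper's bare assertion that self-duality forces $F=\bar F^\ast$.
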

%\begin{proof}
%Combine Theorem \ref{p1_thm6} with \ref{p1_thm14}.
%\end{proof}

%%%%%%%%%%%%%%%%%%%%%%%%%%%%%%%%%%%%%%%%%%%%%%%%%%%%%%%%%%%%%%%%%%
%Since $\hat{F}_i=\dfrac{x^n-1}{F_i}$ and $\hat{\bar{F}}^\ast_j=\dfrac{x^n-1}{\bar{F}^\ast_j}$,
%Again $\hat{F}_i=\hat{\bar{F}}^\ast_j$, $ \Rightarrow\dfrac{x^n-1}{F_i}=\dfrac{x^n-1}{\bar{F}^\ast_j}$, $\Rightarrow F_i=\bar{F}^\ast_j$. We have proved the following result.
%
%\begin{cor}
%Let $C$ be a cyclic code of length $n$ over $\mathcal{R}$ with $C=\left\langle \hat{F}_1  \right\rangle $ $\oplus $   $\left\langle U\hat{F}_2  \right\rangle $ $\oplus $  $ \left\langle 2\hat{F}_3  \right\rangle $ $\oplus $ $\left\langle 2U\hat{F}_4  \right\rangle $ $\oplus $  $\left\langle (2+Um_f)\hat{F}_5  \right\rangle $ $\oplus $  $ \left\langle \left\langle 2,U\right\rangle  \hat{F}_6  \right\rangle$, where $m_{f}$ is an unit in $\dfrac{\mathbb{F}_4[x]}{\left\langle f \right\rangle}$. Then $C$ is the self-dual code iff satisfies the following conditions 
%$$F_1=\bar{F}^\ast_0,~~ F_2=\bar{F}^\ast_2,~~ F_3=\bar{F}^\ast_3,~~ F_4=\bar{F}^\ast_6,~~ F_5=\bar{F}^\ast_5.$$
%\end{cor}

%%%%%%%%%%%%%%%%%%%%%%%%%%%%%%%%%%%%%%%%%%%%%%%%%%%%%%%%%%%%%%%%%%%%%%%%%%%%%%%%%%%%%%%%%%%%%%%%%%%%%%%%%%%
\begin{thm}\label{p1_thm16}
Let $C$ be a cyclic code of length $n$ over $\mathcal{R}$ with $C=\left\langle F \right\rangle_{R} \oplus U\left\langle F \right\rangle_{R} \oplus 2\left\langle F \right\rangle_{R}\oplus  2U\left\langle F \right\rangle_{R}\oplus (2+Um_f) \left\langle F \right\rangle_{R},$  where $m_{f}$ is an unit in $\dfrac{\mathbb{F}_4[x]}{\left\langle f \right\rangle}$. Then $C$ is the Hermitian self-dual code if and only if  $$F=\bar{F}^\ast,~~ G=\bar{G}^\ast,~~ H=\bar{H}^\ast,~~ Q=\bar{Q}^\ast,~~ T=\bar{T}^\ast.$$
\end{thm}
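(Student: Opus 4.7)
The plan is to mirror the Euclidean argument that produced Theorem \ref{p1_thm12} from Theorem \ref{p1_thm10}, systematically replacing the reciprocal $*$ throughout by the composition $\bar{\phantom{f}}^{\,*}$. By applying Theorem \ref{p1_thm14 a} to the given $C$, its Hermitian dual admits a presentation
$$C^{\perp_H} = \left\langle \bar{F}^\ast \right\rangle_R \oplus U\left\langle \bar{G}^\ast \right\rangle_R \oplus 2\left\langle \bar{H}^\ast \right\rangle_R \oplus 2U\left\langle \bar{Q}^\ast \right\rangle_R \oplus (2+Um_f)\left\langle \bar{T}^\ast \right\rangle_R,$$
in exactly the same five-summand shape as the hypothesized decomposition of $C$. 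The theorem is then the assertion that these two expressions coincide if and only if the generators agree position by position.

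For the forward direction I would invoke the uniqueness built into the five-component decomposition: the structure theorem behind Theorem \ref{p1_thm7} attaches to every cyclic code a well-defined quintuple of monic divisors of $x^n-1$ filling the roles of $F,G,H,Q,T$. Assuming $C=C^{\perp_H}$ and matching the two presentations summand by summand therefore forces $\langle F\rangle_R=\langle \bar{F}^\ast\rangle_R$, $\langle G\rangle_R=\langle \bar{G}^\ast\rangle_R$, and so on; since the generators on both sides are monic divisors of $x^n-1$, equality of the principal ideals in $R$ upgrades to equality of polynomials, giving $F=\bar{F}^\ast$, $G=\bar{G}^\ast$, $H=\bar{H}^\ast$, $Q=\bar{Q}^\ast$, $T=\bar{T}^\ast$.

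The converse is an immediate substitution: plugging the five equalities into the expression for $C^{\perp_H}$ above reproduces the given expression for $C$ term by term, hence $C=C^{\perp_H}$.

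The only step needing genuine care is the uniqueness of the five-polynomial decomposition, which is what lets us ``read off'' generators from the code. I would justify it by combining the non-commutative Chinese Remainder decomposition of $\mathcal{R}[x]/\langle x^n-1\rangle$ with the classification of right $\mathcal{R}$-submodules in Theorem \ref{p1_thm3}: at each irreducible factor the summands $\langle F\rangle_R$, $U\langle G\rangle_R$, $2\langle H\rangle_R$, $2U\langle Q\rangle_R$, $(2+Um_f)\langle T\rangle_R$ project into mutually distinct ``types'' of right submodule (corresponding to the distinct generators $1,\,U,\,2,\,2U,\,2+Um_f$ listed in Theorem \ref{p1_thm3}), so no two summands can overlap and each polynomial is uniquely recoverable from $C$. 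Once this uniqueness is in hand the rest of the argument is mechanical.
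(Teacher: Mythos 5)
Your argument is correct and matches what the paper intends: the paper states this theorem without proof, as the Hermitian analogue of Theorem~\ref{p1_thm12}, whose justification is exactly the summand-by-summand matching of the decomposition of $C$ against the decomposition of its dual supplied by Theorem~\ref{p1_thm14 a} (resp.\ Theorem~\ref{p1_thm10} in the Euclidean case). Your explicit treatment of the uniqueness of the five-polynomial decomposition --- via the Chinese Remainder decomposition and the distinct submodule types of Theorem~\ref{p1_thm3} --- is the one step the paper leaves implicit, and it is sound.
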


\begin{example}
The factorization of $x^7-1$ is $(x-1)(x^3+x+1)(x^3+x^2+1)$ over $\mathbb{F}_4$. Let $f_1=(x-1)$, $f_2=(x^3+x+1)$ and $f_3=(x^3+x^2+1)$, then $f_1=f^\ast_1$, $f_2=f^\ast_3$ and $f_3=f^\ast_2$. The following cyclic codes of length $7$ over $\mathcal{R}$ are self-dual (Euclidean) codes and their Gray image $\Phi(C)$ has parameters $[28,14,4]$ over $\mathbb{F}_4$. \\
$$\langle f_1f_2, \quad rf_2f_3 \rangle, \qquad \langle f_1f_3, \quad  rf_2f_3 \rangle, \qquad \mbox{where}~ r\in\{U,2,2+U\},$$
$$\langle 2Uf_1f_3, \quad 2f_1f_2, \quad Uf_1f_3, \quad sf_2f_3 \rangle, \qquad \mbox{where}~ s\in\{2,2+U\},$$
$$\langle 2Uf_1f_2,  \quad 2f_1f_3, \quad Uf_1f_3,  \quad tf_2f_3 \rangle, \qquad \mbox{where}~ t\in\{2,2+U\}.$$
\end{example}
 
\begin{example} 
The factorization of $x^5-1$ is $(x-1)(x^2+wx+1)(x^2+w^2x+1)$ over $\mathbb{F}_4$. Let $f_1=(x-1)$, $f_2=(x^2+wx+1)$ and $f_3=(x^2+w^2x+1)$, then $f_1=\bar{f}^\ast_1$, $f_2=\bar{f}^\ast_3$ and $f_3=\bar{f}^\ast_2$. The following cyclic codes of length $5$ over $\mathcal{R}$ are self-dual (Hermitian) codes and their Gray image $\Phi(C)$ has parameters $[20,10,4]$ over $\mathbb{F}_4$. \\
$$\langle f_1f_2, \quad rf_2f_3 \rangle, \qquad \langle f_1f_3, \quad  rf_2f_3 \rangle, \qquad \mbox{where}~ r\in\{U,2,2+U\},$$
$$\langle 2Uf_1f_3, \quad 2f_1f_2, \quad Uf_1f_3, \quad sf_2f_3 \rangle, \qquad \mbox{where}~ s\in\{2,2+U\},$$
$$\langle 2Uf_1f_2,  \quad 2f_1f_3, \quad Uf_1f_3,  \quad tf_2f_3 \rangle, \qquad \mbox{where}~ t\in\{2,2+U\}.$$
\end{example}
%%%%%%%%%%%%%%%%%%%%%%%%%%%%%%%%%%%%%%%%%%%  Conclusion  %%%%%%%%%%%%%%%%%%%%%%%%%%%%%%%%%%%%%%%%%%%%%%%%%%

\section{Conclusion}
In 2013, Alahmadi et al. developed cyclic codes over finite matrix ring $M_2(\mathbb{F}_2)$ and their duals as right ideals in terms of two generators. Also cyclic codes over $M_2(\mathbb{F}_2)$ were made the existence of infinitely many nontrivial cyclic codes for Euclidean product. All this was derived of odd length code. In this paper, we constructed the structure of $M_2(\mathbb{Z}_4)$ and developed cyclic dual codes and cyclic self-dual codes over $M_2(\mathbb{Z}_4)$ which is even length codes over $M_2(\mathbb{F}_2)$. In \cite{Alahmadi2013} \cite{Luo2017}, it is not possible to construct negacyclic code, since the characteristic of structure of the ring is $2$. But in our construction one can form negacyclic code.  Welcome to the reader to construct even length codes over $M_2(\mathbb{Z}_4)$. Another useful direction for further study would be to consider LCD codes over $M_2(\mathbb{Z}_4)$.
%%%%%%%%%%%%%%%%%%%%%%%%%%%%%%%%%%%%%%%%%%%%%%%%%%%%%%%%%%%%%%%%%%%%%%%%%%%%%%%%%%%%%%%%%%%%%%%%%%%%%%%%%%%
\section*{Acknowledgements}
The author Sanjit Bhowmick is thankful to MHRD for financial support.

%%%%%%%%%%%%%%%%%%%%%%%%%%%%%%%%%%%%%%%%%%%%%%%%%%%%%%%%%%%%%%%%%%%%%%%%%%%%%%%%%%%%%%%%%%%%%%%%%%%%%%%%%%%

%\section*{References}

%\bibliography{science.bib}

\end{document}